\newtheorem{proposition}{Proposition}
\newcommand{\tr}{\ensuremath{\operatorname{tr}}}
\DeclareMathOperator{\Span}{span}
\DeclareMathOperator{\Tr}{tr}
\newtheorem{theorem}{Theorem}[section]
\newcommand{\ket}[1]{|#1\rangle}
\newcommand{\braket}[2]{\langle#1|#2\rangle}
\newcommand{\ketbra}[2]{|#1\rangle\langle#2|}
\newcommand{\id}{\mathbbm{1}}
\begin{document}


\title{Enabling computation of correlation bounds for finite-dimensional quantum systems via symmetrisation}

\author{Armin Tavakoli} \thanks{A. T. and D. R. contributed equally for this project.}
\affiliation{Department of Applied Physics, University of Geneva, 1211 Geneva, Switzerland}

\author{Denis Rosset} \thanks{A. T. and D. R. contributed equally for this project.}
\affiliation{Perimeter Institute for Theoretical Physics, Waterloo, Ontario, Canada, N2L 2Y5}

\author{Marc-Olivier Renou}
\affiliation{Department of Applied Physics, University of Geneva, 1211 Geneva, Switzerland}

\date{\today}

\begin{abstract}
  We present a technique for reducing the computational requirements by several orders of magnitude in the evaluation of semidefinite relaxations for bounding the set of quantum correlations arising from finite-dimensional Hilbert spaces.  The technique, which we make publicly available through a user-friendly software package, relies on the exploitation of symmetries present in the optimisation problem to reduce the number of variables and the block sizes in semidefinite relaxations. It is widely applicable in problems encountered in quantum information theory and enables computations that were previously too demanding. We demonstrate its advantages and general applicability in several physical problems. In particular, we use it to robustly certify the non-projectiveness of high-dimensional measurements in a black-box scenario based on self-tests of $d$-dimensional symmetric informationally complete POVMs.
\end{abstract}


\maketitle


\textit{Introduction.---}  Finite-dimensional quantum systems are common in quantum information theory. They are standard in the broad scope of quantum communication complexity problems (CCPs) \cite{qCCP} in which quantum correlations are studied under limited communication resources. Furthermore, they are widely used in semi-device-independent quantum information protocols \cite{sdi1} in which systems are fully uncharacterised up to their Hilbert space dimension. Also, studying correlations obtainable from finite-dimensional systems is critical for device-independent dimension witnessing \cite{DW, DW2}.

In view of their diverse relevance, it is important to bound quantum correlations arising from dimension-bounded Hilbert spaces. To this end, semidefinite programs (SDPs) \cite{SDP} constitute a powerful tool. Lower bounds on quantum correlations are straightforwardly obtained using alternating convex searchers (SDPs in see-saw) \cite{seesaw2, seesaw}. However, obtaining upper bounds valid for \textit{any} quantum states and measurements is more demanding. A powerful approach to this problem is to relax some well-chosen constraints of quantum theory so that the resulting super-quantum correlations easily can be computed with SDPs, thus returning upper bounds on quantum correlations. Such approaches are commonplace in various problems in quantum information theory \cite{NPA, Moroder, NV}. A hierarchy of semidefinite relaxations for upper-bounding quantum correlations on  dimension-bounded Hilbert spaces was introduced by Navascu\'es and V\'ertesi (NV) \cite{NV, NV2}. This is an effective tool for problems involving a small number of states and measurements, and low Hilbert space dimensions. However beyond simple scenarios, the computational requirements of evaluating the relaxations quickly become too demanding. 

It is increasingly relevant to overcome the practical limitations of the NV hierarchy, i.e. to provide efficient computational tools for  bounding quantum correlations in problems beyond small sizes and low Hilbert space dimensions. This is motivated by both theoretical and experimental advances. Dimension witnessing has been experimentally realised far beyond the lowest Hilbert space dimensions \cite{AB14, AF18}. Furthermore, increasing the dimension can activate unexpectedly strong quantum correlations \cite{magic7}; a phenomenon that has been experimentally demonstrated \cite{DM18}. Also, quantum correlations obtained from a sizeable number of states and measurements are interesting for studying mutually unbiased bases \cite{Sym8}.  Moreover, large problem sizes naturally appear in multipartite CCPs involving single particles \cite{Galvao, TS05, ST16}. Similarly sized problems also appear in multipartite CCPs for the characterisation of entangled states and measurements \cite{TA18}. In addition, efficiently evaluating the NV hierarchy many times can improve randomness extraction from experimental data \cite{MT16}.

In this work we develop techniques for efficiently bounding quantum correlations under dimension constraints. The technique is powered by the exploitation of \textit{symmetries}, i.e. re-labellings of optimisation variables that leave a figure of merit invariant. The use of symmetries for reducing the complexity of SDPs was first introduced in \cite{Parrilo} and was shown to lead to remarkable efficiency gains. These efficiency gains have also been harvested in several specific quantum information problems relying on SDPs. These include finding bounds on classical  \cite{Sym3} and quantum \cite{Sym7, Sym4} Bell correlations, quantifying entanglement \cite{Moroder, BancalAdditional}, and finding symmetric Bell inequalities \cite{SymJD}. Note that symmetries in Bell scenarios also have been studied without application to SDPs \cite{Sliwa, CG, 50years, Renou2016}. In dimension-bounded scenarios, symmetries have been considered for CCPs tailored for studying the existence of mutually unbiased bases \cite{Sym8}.

We describe a powerful, generally applicable, and easy-to-use technique for symmetrised semidefinite relaxations for dimension-bounded quantum correlations. We show how to automatise searches for symmetries in general Bell scenarios and CCPs, and how these can be exploited to reduce computational requirements in all parts of the NV hierarchy. This amounts to reducing the number of variables in an optimisation, and reducing block sizes beyond previous approaches. We make these techniques readily available via a user-friendly software package supporting general correlation scenarios. Subsequently, we give examples of problems that can be solved faster (several orders of magnitude), and other previously unattainable problems that can now be computed. We focus on the usefulness of symmetrisation for the problem of certifying that an uncharacterised device implements a non-projective measurement using only the observed correlations. To this end, we introduce a family of CCPs, prove that they enable self-tests of $d$-dimensional symmetric informationally complete (SIC) POVMs, then use symmetrised semidefinite relaxations to bound the correlations attainable under projective measurements. This allows us to go beyond previously studied qubit systems \cite{APV16, GGG16, Armin, Piotr, Massi} and robustly certify the non-projectiveness of SIC-POVMs subject to imperfections.

\textit{Bounding finite-dimensional quantum correlations.---}
We begin by summarising the NV hierarchy \cite{NV, NV2} for optimising dimensionally constrained quantum correlations.
For simplicity, we first describe CCPs, and later consider Bell scenarios.

Consider a CCP in which a party, Alice, holds a random input $x$ and another party, Bob, holds a random input $y$.
Alice encodes her input into a quantum state $\rho_x$ of dimension $d$ and sends it to Bob.
Bob performs a measurement $\{M_y^b\}_b$ with outcome $b$.
The resulting probability distribution is used to evaluate a functional $F(P)=\sum_{x,y,b}c_{x,y}^bP(b|x,y)$, where $c_{x,y}^b$ are real coefficients.
The problem of interest is to compute the maximal quantum value of $F$ when the probabilities are given by the Born rule $P(b|x,y)=\Tr\left(\rho_xM_y^b\right)$, where the measurement operators are taken to be projectors.
The NV hierarchy presents the following semidefinite relaxations.
Sample a random set of states and measurements $\{\rho_x\}$ and $\{M_y^b\}$ of dimension $d$, which we collect in the set of operator variables $\{X_i\}$. Then, generate all strings, $\{s_j(X)\}_j$, of products of at most $L$ of these operators.
The choice of $L$ determines the degree of relaxation, i.e., the level of the hierarchy. Construct a moment matrix
\begin{equation}
  \label{moment}
  \Gamma_{j,k}=\Big < s_j(X)^\dagger~ s_k(X) \Big >\;,
\end{equation}
where, for the present CCP, the expectation value of an operator product $S$ is $\left< S \right> = \Tr S$.
Repeat this process many times, each time obtaining a new moment matrix.
Terminate the process when the sampled moment matrix is linearly dependent on the collection of those previously generated.
Hence, $\{\Gamma^{(1)},\ldots,\Gamma^{(m)}\}$ identifies a basis for the feasible affine subspace $\mathcal{F}$ of such matrices under the given dimensional constraint. The semidefinite relaxation amounts to finding an affine  combination $\Gamma=\sum_{\ell=1}^m c_\ell\Gamma^{(\ell)} \in \mathcal{F}$, with $\Gamma\geq 0$, that maximises the functional $F$ (which can be expressed as a linear combination of entries of $\Gamma$).
Hence, the relaxation reads
\begin{align}\label{sdp}
  \max_{\vec{c}\in\mathbb{R}^m} ~  F(\Gamma) ~~~~~~~~  \text{s.t. } ~~~~~~~~\Gamma  \geq 0, ~~ \sum_{\ell=1}^m c_\ell  = 1.
\end{align}

In summary, the problem consists in first sampling a basis enforcing the dimensional constraint and then evaluating an SDP.
Crucially, the complexity of solving the SDP hinges on the number of basis elements, $m$, needed to complete the basis and the size of the final SDP matrix, $n$. For a single iteration of primal-dual interior point solvers, the required memory scales as $\mathcal{O}(m^2+mn^2)$ while the CPU time scales as $\mathcal{O}(m^3+n^3+mn^3+m^2n^2)$~\cite{Personal1}.
Without exploitation of the problem structure, medium-sized physical scenarios, as well as small-sized scenarios with high relaxation degree, practically remain out of reach for current desktop computers.

\textit{Symmetric relaxations.---} The key to reducing the computational requirements for the NV hierarchy is two-fold; first reducing the number of elements needed to form the basis in the sampling step, i.e.,  decreasing the dimension of $\mathcal{F}$ and then shrinking the size of the positivity constraints in the subsequent SDP by block-diagonalising $\Gamma$. Here, we show how such a reduction can be systematically achieved by identifying and exploiting the set of symmetries of the problem.

Recall that $\{X_i\}$ collects all the operators (states, measurements etc.) present in the formulation of the problem, where $i\in\mathcal{I}$ is an index. Consider a permutation of elements of $\mathcal{I}$, i.e., a bijective function $\pi: \mathcal{I} \to \mathcal{I}$.
We write $\pi(X_i) = X_{\pi(i)}$ and define the action of the permutation on the strings $s=X_iX_j\ldots$ of products of operators $X_i$ appearing in the NV hierarchy as $\pi(X_i X_j \ldots) = X_{\pi(i)} X_{\pi(j)}\ldots$.
We call $\pi$ an {\em ambient symmetry} if it is a transformation of the scenario which preserves its structure, as expressed by implicit or explicit constraints on the operators $\{X_i\}$.
The set of those symmetries form the {\em ambient group} $\mathcal{A} = \{ \pi \}$. In Supplementary Material (SM), we describe the ambient groups for general Bell scenarios and CCPs. Given a moment matrix $\Gamma$ and $\pi\in \mathcal{A}$, we consider the re-labelled matrix $\pi(\Gamma)$ where $\big(\pi(\Gamma)\big)_{j,k} = \Gamma_{\pi^{-1}(j),\pi^{-1}(k)}$, according to the convention of Eq.~\eqref{moment}.
By construction, $\pi$ preserves the constraints of the problem: for a feasible moment matrix $\Gamma\in\mathcal{F}$ we have $\pi(\Gamma)\in\mathcal{F}$ for any $\pi\in \mathcal{A}$.
Moreover, the feasible set $\mathcal{F}$ is convex, so any convex combination of those $\pi(\Gamma)$ is feasible as well.

However, not all elements of $\mathcal{A}$ leave the objective $F(\Gamma)$ invariant.
We write $\mathcal{G} = \{ \pi \in \mathcal{A} : F(\pi(\Gamma)) = F(\Gamma) \}$ the {\em symmetry group} of the optimisation problem.
One can straightforwardly find the elements of $\mathcal{G}$ by enumerating the elements of $\mathcal{A}$ and filtering those that leave $F(\pi(\Gamma)) = F(\Gamma)$ invariant. Then, following a standard procedure~\cite{Parrilo,SymJD,Sym7, Sym8} we can average any optimal solution $\Gamma$ under the Reynolds operator, defined as:
\begin{equation}\label{Reynolds}
\Gamma' \equiv \mathcal{R}(\Gamma) = \frac{1}{|\mathcal{G}|}\sum_{\pi\in \mathcal{G}} \pi(\Gamma)
\end{equation}
where $|\mathcal{G}|$ is the size of $\mathcal{G}$ and obtain an optimal solution of the problem, which now satisfies $\pi(\Gamma') = \Gamma'$ for all $\pi\in \mathcal{G}$.  Since the set $\Gamma'$ is characterised by the relation $\mathcal{R}(\Gamma')=\Gamma'$, instead of searching the optimal $\Gamma$ in the full feasible set, it is sufficient to only consider the symmetric subspace $\mathcal{R}(\mathcal{F})$ given by the image of the feasible set under $\mathcal{R}$.
As discussed above, the basis of $\mathcal{F}$ is found by sampling.
To sample $\mathcal{R}(\mathcal{F})$ instead, we simply apply $\mathcal{R}$ on each sample during the construction of the basis, thus obtaining $\{{\Gamma'}^{(1)}, \ldots, {\Gamma'}^{(m')}\}$.
As a result, the size of the basis, $m'$, decreases due to the smaller dimension of $\mathcal{R}(\mathcal{F})$. In SM, we discuss methods for speeding up the computation of $\mathcal{R}$.

Moreover, a second major reduction is obtained: as the symmetrised moment matrices $\Gamma'$ commute with a representation of the group $\mathcal{G}$, there exists~\cite{Parrilo} a unitary matrix that block-diagonalises the moment matrix. This reduces the size of the positivity constraint on the final SDP matrix. A complete symmetry exploitation is obtained when the decomposition of the representation of $\mathcal{G}$ into irreducible components with multiplicities is known. We achieve this via an efficient general block-diagonalisation method detailed in SM. Moreover, we make available a user-friendly MATLAB package \cite{package} for symmetrisation of semidefinite relaxations in the NV hierarchy applicable to general correlation scenarios encountered in quantum information. The package automates both a search for the symmetries of a problem (if these are unknown) and the construction of symmetry-adapted relaxation.

\textit{Robust  certification of non-projective measurements based on SIC-POVMs.---} We now exemplify the usefulness of symmetrisation in a physical application.  We certify, solely from observed data, that an uncharacterised device ('black-box') implements a non-projective measurement. Non-projective measurements have diverse applications in quantum theory  \cite{USD1, USD2, DB98, Renes, Shang, APV16, GM17, Brask}. This has motivated interest in their black-box certification \cite{APV16, GGG16, Armin, Piotr, Massi}. Using semidefinite relaxations (whose complexity scale quickly with dimension) as a primary tool, these works limit themselves to qubits.   We use symmetrisation to overcome this limitation and certify the non-projectiveness of higher-dimensional measurements of physical interest. The latter is of particular importance; a certificate is typically only useful for non-projective measurements that are close (e.g. in fidelity) to   a particular targeted non-projective measurement  corresponding to the optimal quantum correlations \cite{Armin}. 

One of the most celebrated non-projective measurements are SIC-POVMs. These are sets of $d^2$ sub-normalised rank-one projectors $\{\frac{1}{d}\ketbra{\psi_x}{\psi_x}\}_{x=1}^{d^2}$ with  $\lvert \braket{\psi_x}{\psi_{x'}} \rvert^2=1/(d+1)$ when $x\neq x'$. Higher-dimensional SIC-POVMs have been of substantial interest for both fundamental (see e.g.~\cite{Fuchs} for a review) and practical considerations \cite{SIC1, SIC2, SIC3, SIC4, SIC5} in quantum information theory. We introduce a family of CCPs and prove that optimal quantum correlations imply a $d$-dimensional SIC-POVM. However, due to unavoidable experimental imperfections, such optimal correlations will never occur in practice. Therefore, we use symmetrisation to certify the non-projectiveness of measurements close to SIC-POVMs, that achieve nearly-optimal correlations. Moreover, as noted in \cite{Armin}, the  dimension-bounded scenario is well-suited for black-box studies of non-projective measurements since said property is only well-defined on Hilbert spaces of fixed dimension.

Consider a CCP in which Alice encodes her input $x$ into a $d$-dimensional system sent to Bob, who associates his input $y$ to a measurement producing an outcome $b$.
\begin{figure}
	\centering
	\includegraphics[width=0.95\columnwidth]{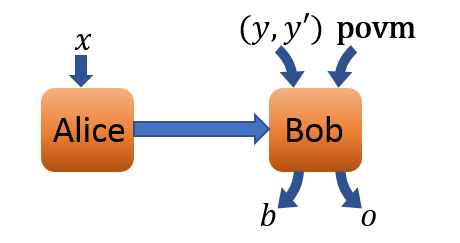}
	\caption{Illustration of the CCP \eqref{witness}. Bob has $\binom{N}{2}$ settings labelled by $(y,y')$ and one additional setting labelled $\mathbf{povm}$. Alice and Bob aim to satisfy the following relations: $o=x$ for the setting $\mathbf{povm}$, and $b=0$ when $x=y$ and $b=1$ when $x=y'$ respectively for the settings $(y,y')$.}\label{fig1}
\end{figure}
A general witness can be written
\begin{equation}
W=\sum_{x,y,b} \alpha_{xyb}P(b|x,y),
\end{equation}
where $\alpha_{xyb}$ are real coefficients. By tuning the coefficients, one can construct CCPs in which the optimal correlations $W^Q$ are uniquely realised with a particular non-projective measurement. This is known as a self-test \cite{TK18}. Consequently, there must exist some  $W^P<W^Q$ which bounds the correlations under all projective measurements. Thus, observing $W>W^P$ certifies that Bob implements a non-projective measurement.
 
We construct a family of CCPs (inspired by Refs~\cite{Armin, BNV13}) tailored to self-test $d$-dimensional SIC-POVMs.  Alice and Bob each receive inputs  $x\in[N]$ and $(y,y')\in[N]$ with $y<y'$ respectively, for some $N>d$ and $[N]=\{1,\ldots,N\}$. Bob outputs $b\in\{0,1\}$. Bob also possesses another  measurement setting labelled $\mathbf{povm}$ which returns an outcome $o\in [N]$. The witness of interest is 
\begin{multline}\label{witness}
W_{d}=\sum_{x<x'} P(b=0|x,(x,x'))+P(b=1|x',(x,x'))\\
+\sum_{x=1}^{N}P(o=x\lvert x,\mathbf{povm}),
\end{multline} 
The scenario is illustrated in Figure \ref{fig1}. 

\begin{theorem} For $N=d^2$, the maximal quantum value of the witness is
	\begin{equation}\label{Qmax}
	W^Q_{d}=\frac{1}{2}\sqrt{d^5(d-1)^2(d+1)}+\binom{d^2}{2}+d.
	\end{equation}
	This value self-tests that Alice prepares a SIC-ensemble and that Bob's setting $\mathbf{povm}$ corresponds to a SIC-POVM. 
\end{theorem}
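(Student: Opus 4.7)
The plan is to split $W_d$ into its pairwise state-discrimination part and its $\mathbf{povm}$ part, upper-bound each independently using standard operator inequalities, combine them to match $W_d^Q$, and finally extract the SIC-POVM structure from the equality conditions. I would first handle the $\mathbf{povm}$ part: using $\rho_x \leq I$, $N_x \geq 0$ and $\sum_x N_x = I$,
\begin{equation*}
\sum_{x=1}^{d^2}\Tr(\rho_x N_x)\leq \sum_x \Tr(N_x) = \Tr\!\Big(\sum_x N_x\Big) = d,
\end{equation*}
with term-by-term equality requiring $N_x(I-\rho_x)=0$. Because $\Tr\rho_x=1$, its unit eigenspace is at most one-dimensional, so the range of $N_x$ collapses to a single vector $\ket{\psi_x}$, forcing $\rho_x = \ketbra{\psi_x}{\psi_x}$ pure and $N_x = \alpha_x\ketbra{\psi_x}{\psi_x}$ with $\alpha_x \geq 0$ and $\sum_x\alpha_x=d$.

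Next, I would bound the pairwise part. Writing $M^1_{(x,x')} = I - M^0_{(x,x')}$ and applying Helstrom's identity gives
\begin{equation*}
\sum_{x<x'}\!\big[\Tr(\rho_x M^0_{(x,x')})+\Tr(\rho_{x'}M^1_{(x,x')})\big]\leq \binom{d^2}{2} + \sum_{x<x'}\tfrac12\|\rho_x-\rho_{x'}\|_1,
\end{equation*}
and for pure $\rho_x$ one has $\tfrac12\|\rho_x-\rho_{x'}\|_1 = \sqrt{1-t_{xx'}}$ with $t_{xx'}=|\braket{\psi_x}{\psi_{x'}}|^2$. Concavity of $t\mapsto\sqrt{1-t}$ lets Jensen replace this by $\binom{d^2}{2}\sqrt{1-\bar t}$, where $\bar t = \binom{d^2}{2}^{-1}\sum_{x<x'}t_{xx'}$, and the Welch bound for $d^2$ unit vectors in $\mathbb{C}^d$ gives $\sum_{x\neq x'}t_{xx'}\geq d^2(d-1)$, hence $\bar t \geq 1/(d+1)$. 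This yields $\sum_{x<x'}\sqrt{1-t_{xx'}} \leq \binom{d^2}{2}\sqrt{d/(d+1)} = \tfrac12\sqrt{d^5(d-1)^2(d+1)}$. Adding to the $\binom{d^2}{2}+d$ already accumulated reproduces the claimed $W_d^Q$, and achievability is a direct check by taking $\{\ket{\psi_x}\}$ to be any $d$-dimensional SIC, $N_x = \tfrac1d\ketbra{\psi_x}{\psi_x}$, and the Helstrom binary measurements for each pair.

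The self-test is then extracted by asking which states and measurements must saturate every inequality simultaneously. Helstrom saturation fixes Bob's $(y,y')$-measurements as the Helstrom discriminators between $\ketbra{\psi_x}{\psi_x}$ and $\ketbra{\psi_{x'}}{\psi_{x'}}$; Jensen equality forces all $t_{xx'}$ to a common value, necessarily $1/(d+1)$; Welch saturation makes $\{\ket{\psi_x}\}$ a tight frame; and an equiangular tight frame of $d^2$ unit vectors in $\mathbb{C}^d$ with common squared overlap $1/(d+1)$ is by definition a SIC ensemble. Finally, the rank-one projectors of a SIC form a basis of the Hermitian operators on $\mathbb{C}^d$, so the POVM relation $\sum_x\alpha_x\ketbra{\psi_x}{\psi_x}=I$ admits the unique solution $\alpha_x = 1/d$, identifying Bob's $\mathbf{povm}$ setting as the corresponding SIC-POVM. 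The most delicate step will be this last one --- carefully chaining the equality conditions to pin down the SIC structure up to the inherent unitary freedom, rather than admitting spurious solutions --- while the upper bound itself reduces to well-known inequalities (Helstrom, Jensen, Welch) that slot together cleanly.
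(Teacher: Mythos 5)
Your proof follows essentially the same route as the paper's: split $W_d$ into the pairwise-discrimination part $T$ and the $\mathbf{povm}$ part $R$, bound $T$ via concavity (your Jensen step is the paper's $\sum_i\sqrt{s_i}\le\sqrt{n\sum_i s_i}$) together with the Welch/frame-potential inequality, bound $R$ via a trace argument, and read off the SIC structure from the equality conditions; your invocation of Helstrom's identity is the same computation the paper does by solving the characteristic equation of $\ketbra{\psi}{\psi}-\ketbra{\phi}{\phi}$. The one point where you are more careful than the paper is the closing argument that the weights in $\sum_x\alpha_x\ketbra{\psi_x}{\psi_x}=\id$ are forced to $\alpha_x=1/d$ because the $d^2$ SIC projectors span the Hermitian operators on $\mathbb{C}^d$ — the paper stops at ``rank-one and aligned with the ensemble'' and leaves the normalisation of the weights implicit.
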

The proof is given in SM. To enable the certification of a non-projective measurement producing nearly-optimal correlations, we must obtain a bound $W^P_d$ on $W_d$ respected by all projective measurements. To this end, we use symmetrised semidefinite relaxations. 

The symmetries of the witness \eqref{witness} correspond to coordinated permutations of the inputs of Alice and inputs and outputs of Bob. We permute $x$ among its $N$ possible values. This requires us to compensate the permutation by also applying it to $o$. Furthermore, to preserve the probabilities appearing in the first summand of $\eqref{witness}$, we must apply a permutation to the indices $(y,y')$ and the outcome $b$. Moreover, since we are interested in bounding $W_d$ under projective measurements, said property must be explicitly imposed on Bob's setting $\mathbf{povm}$. This means that at most $d$ of the POVM elements $\{M_\mathbf{povm}^x\}_{x=1}^{d^2}$ are non-zero, corresponding to  rank-one projectors. This must be accounted for in the symmetries of the problem. In SM we discuss the symmetries in  detail.

\begin{table}
	\centering
	\begin{tabular}{|c|c|c|c|c|c|}
		\hline
		$d$ & 2  & 3  &  4 & 5 & 6\\ [0.5ex]
		\hline
		LB: $W_d^P$ &  12.8484  & 70.0961 &  231.2685 & 578.7002 & 1219.0129\\
		UB: $W_d^P$ &  12.8484 &  70.1133 & 231.2685 & 578.7987 & 1219.2041  \\
		 $W_d^Q$ & 12.8990  &  70.1769 & 231.3313 & 578.8613 & 1219.2667\\				
		\hline
	\end{tabular}
	\caption{Upper bounds (UB) and lower bounds (LB) on quantum correlations under projective measurements with $N=d^2$. The lower bounds are obtained via SDPs in alternate convex search and the upper bounds via symmetrised semidefinite relaxations.}\label{tablenoproj}
\end{table}

Using the general recipe, we have implemented the symmetrised NV hierarchy. We use the relaxation degree corresponding to monomials $\{\openone, \rho, M, M_\mathbf{povm}, \rho \rho\}$ and also all the monomials $\rho_xM_{(x,x')}^b$ appearing in the first summand of \eqref{witness}. In Table \ref{tablenoproj} we present the upper bounds $W_d^P$. We have also obtained lower bounds for $W_d$ under projective measurements by considering SDPs in alternate convex search, enforcing only $d$ non-zero elements of trace one. These lower bounds were verified to be achieved with projective measurements up to machine precision. The results show that the obtained upper bounds are either optimal or close to optimal, depending on $d$. In analogy with previous works \cite{APV16, GGG16, Armin, Piotr, Massi}, we find that the gap between optimal quantum correlations and those obtained under projective measurements is small.

Let us now consider the role of symmetrisation in obtaining the above results. In Table \ref{tablesizes} we present the number of samples needed to complete the basis in the NV hierarchy, the size of the final SDP matrix, and the time required to evaluate the SDPs. We compare these parameters for a standard implementation, a symmetrised implementation only reducing the number of samples, and a the full symmetrisation developed to also exploit block-diagonalisation of the SDP matrix. Without symmetries, we are unable to go beyond qubit systems ($d=2$), since already for $d=3$ we have over $12000$ samples. Interestingly, this rapid increase in complexity can be completely overcome via symmetrisation: the number of samples becomes constant when $d=4,5,6$. In addition, the size of the SDP matrix  is $1+d-2d^2+3d^4$ and thus increases polynomially in $d$. This causes a symmetrisation that only addresses the number of samples to still be too demanding already when $d>4$. However, using the block-diagonalisation methods detailed in SM, we can reduce the size of the SDP matrix to be constant for $d=4,5,6$. This allows us to straightforwardly solve the semidefinite relaxations in less than two seconds.

\begin{table}
	\begin{tabular}{l|r|ccccc}
		& d           & 2            & 3                   & 4                & 5                 & 6                \\ \hline
		& \#samples   & 221          & \textgreater{}12000 & -                & -                 & -                \\
		& bl. sizes   & 1{[}43{]}    & 1{[}229{]}          & 1{[}741{]}       & 1{[}1831{]}       & 1{[}3823{]}      \\
		\multirow{-3}{*}{\begin{tabular}[c]{@{}l@{}}Non-\\ sym\end{tabular}}  & SDP {[}s{]} & 2.0          & -                   & -                & -                 & -                \\ \hline
		& \#samples   & 65           & 134                 & \multicolumn{3}{c}{\cellcolor[HTML]{C0C0C0}137}         \\
		& bl. sizes   & 1{[}43{]}    & 1{[}229{]}          & 1{[}741{]}       & 1{[}1831{]}       & 1{[}3823{]}      \\
		\multirow{-3}{*}{\begin{tabular}[c]{@{}l@{}}Sym\\ no BD\end{tabular}} & SDP {[}s{]} & 0.5          & 19                  & 500               & -                 & -                \\ \hline
		& \#samples   & 65           & 134                 & \multicolumn{3}{c}{\cellcolor[HTML]{C0C0C0}137}         \\
		& bl. sizes   & 4{[}6,16{]} & 7{[}3,16{]}        & \multicolumn{3}{c}{\cellcolor[HTML]{C0C0C0}8{[}3,16{]}} \\
		\multirow{-3}{*}{\begin{tabular}[c]{@{}l@{}}Sym\\ +BD\end{tabular}}   & SDP {[}s{]} & 0.3          & 0.6                 & \multicolumn{3}{c}{\cellcolor[HTML]{C0C0C0}1.2}        
	\end{tabular}
	\caption{Comparison between computational parameters for the task of bounding $W_d$ under projective measurements using a standard implementation, symmetrisation to reduce the number of samples (using only Eq.~\eqref{Reynolds}), and symmetrisation to also perform block-diagonalisation (BD). The notation $D[a,b]$ means that there are $D$ blocks with the smallest being of size $a$ and the largest of size $b$.}\label{tablesizes}
\end{table}

	\textit{Further applications.---} The general symmetrisation technique can be used to a wide variety of problems in quantum information theory, among which certification of non-projective measurement constitutes one example. In SM, we consider in detail four families of other problems. For each, we  demonstrate the remarkable computational advantages of symmetrisation, both in terms of reducing the number of basis elements and in terms of block-diagonalisation. This enables us to obtain  improved bounds on previously studied physical quantities. The problems we consider are (high-dimensional and many-input) random access codes \cite{Ambainis, TavakoliRACs}, $I_{3322}$-like Bell inequalities \cite{Froissart, NV2}, a sequential communication in multipartite CCPs (in the spirit of \cite{Galvao, TS05}), and CCPs exhibiting dimensional discontinuities \cite{magic7, DM18}. In the latter, we also exemplify the advantages in automatising the search for the symmetries in problems in which these are not easily spotted by inspection. 
	
	Moreover, we previously observed that the complexity of the evaluation for bonuding $W_d^P$ can be reduced to be constant $d=4,5,6$ via symmetries. This suggests that similar reductions may occur for other CCPs as well.  In SM we have focused on the CCPs known as random access codes and proven that symmetries enable us to evaluate the NV hierarchy with constant complexity for any Hilbert space dimension. In this sense, the computational advantages over standard implementations, as well as over symmetrisation that does not utilise block-diagonalisation, increase with $d$.

\textit{Conclusions.---} We presented a technique for efficiently evaluating semidefinite relaxations of finite-dimensional quantum correlations using symmetries present in the problem. The technique provides remarkable computational advantages and applies to general dimension-bounded quantum correlation problems, which we demonstrated by explicit examples. In particular, we introduced CCPs that self-test $d$-dimensional SIC-POVMs and used them to certify the non-projectiveness of measurements close to SIC-POVMs. Due to the broad applications of SIC-POVMs in quantum information theory, such certificates are relevant to recent experimental advances in high-dimensional quantum systems. A relevant open problem is how to construct witnesses that allow for larger gaps between the projective measurement bound and the quantum bound.
	
We conclude with two open problems. Can the sampling approach be adapted to semidefinite relaxations in Bell inequalities without dimensional bounds? How does the symmetrisation technique adapt to physical problems that do not concern quantum resources; e.g., cardinality of hidden variables \cite{lhv} and the dimension of post-quantum resources?

\textit{Acknowledgements.---} During the completion of this work, we became aware of a work-in-preparation by E. Aguilar and P. Mironowicz to generalise the results of \cite{Sym8}. We are thankful for useful discussions with Jean-Daniel Bancal.
This work was supported by the Swiss National Science Foundation (Starting grant DIAQ, NCCR-QSIT). Research at Perimeter Institute is supported by the Government of Canada through Industry Canada and by the Province of Ontario through the Ministry of Research and Innovation. This publication was made possible through the support of a grant from the John Templeton Foundation.

\onecolumngrid
\appendix 

\section{Ambient groups and symmetry groups}\label{appAmbient}
Here, we describe the general construction of ambient groups for Bell scenarios and communication complexity problems (CCPs) computations. Building on these constructions, we present a simple manner of automatising a search for generators of the symmetry group of an optimisation problem. However, before those considerations, we give a short overview of the terminology and the problem.

The optimisation is conducted by evaluating a polynomial $p$ over a
set of states that depends on the problem. We express that state using its
Kraus decomposition $\rho = K^{\dag} K$:
\begin{equation}
\label{Eq:Problem} \max_{X, K} \tr [Kp (X) K^{\dag}]
\end{equation}
\[ \text{subject to } q_1 (X) = 0, \quad q_2 (X) = 0, \quad \ldots \]
such that $p (X)$ and $\{ q_j (X) \}$ are polynomials in the operator 
variables $(X_i)_{i \in \mathcal{I}}$, where $\mathcal{I}= \{ 1, \ldots, |
\mathcal{I} | \}$ \ and $p (X)$ is Hermitian.

We consider evaluating~(\ref{Eq:Problem}) with a specified finite dimensional
bound. A {\em feasible realisation} is given by a sequence of matrices
$\overline{X} = \left( \overline{X}_i \right)_{i \in \mathcal{I}}$ satisfying
the constraints $\left\{ q_1 \left( \overline{X} \right) = 0, \ldots \right\}$
and a finite dimensional $\overline{K}$ taken from a set $\mathcal{K}$
specified by the problem.
\begin{itemize}
	\item In all the CCPs considered, we use the
	tracial hierarchy of Burgdof and Klep~{\cite{BK}}, $\mathcal{K}= \{
	\mathbbm{1} / d \}$, in which the preparations are represented by density
	matrices that are absorbed into the variables $(X_i)$.
	
	\item In our $I_{3322} (c)$ example, we use the NPA hierarchy where
	$\mathcal{K}= \{ | \psi \rangle : | \psi \rangle \in \mathcal{H},
	\langle \psi | \psi \rangle = 1 \}$ and $\mathcal{H}$ is a finite
	dimensional Hilbert space.
	
	\item The hierarchy of Moroder et al.~{\cite{Moroder}} can be implemented by
	considering a set $\mathcal{K}= \left\{ \overline{K} : \rho \equiv \left(
	\overline{K} \cdot \overline{K}^{\dag} \right) \in \text{PPT} \right\}$
	where $\overline{K}$ is the Kraus decomposition of a positive partial-transpose state $\left(
	\rho^{\top_{\text{B}}} \geqslant 0 \right)$. We have not implemented this particular hierarchy. 
\end{itemize}
We can further restrict the feasible realisations $\overline{X}$ by requiring
the matrices $\left\{ \overline{X}_i \right\}$ to obey additional conditions, for example rank constraints. We write $\Xi = \left\{ \overline{X} \right\}$
the {\em feasible set} of realisations that obey the constraints $\{ q_1
(X) = 0, \ldots \}$ and rank-like constraints.
A symmetry of $\Xi$ is a permutation $\pi : \mathcal{I} \rightarrow
\mathcal{I}$ of the indices $\mathcal{I}$ that obeys
\begin{equation}
\label{Eq:Symmetry} \left( \overline{X}_i \right)_{i \in \mathcal{I}} \in
\Xi \qquad \Rightarrow \qquad \pi \left( \overline{X} \right) = \left(
\overline{X}_{\pi^{- 1} (i)} \right)_{i \in \mathcal{I}} \in \Xi,
\end{equation}
where the definition follows from the requirement $(\sigma \pi) \left(
\overline{X} \right) = \sigma \left( \pi \left( \overline{X} \right) \right)$.
We call the group of all permutations that preserve the structure of the problem~(\ref{Eq:Symmetry}) the {\em ambient group}, $\mathcal{A}= \{ \pi \}$. Similarly, we write $\mathcal{G} \subseteq \mathcal{A}$ the {\em symmetry group} of
the problem which additionally leaves the objective invariant:
\begin{equation}
\label{Eq:SymmetryGroup} \mathcal{G}= \left\{ \pi \in \mathcal{A} \quad :
\quad \forall \overline{K} \in \mathcal{K}, \overline{X} \in \Xi, \quad
\tr \left[ \overline{K}^{\dag} p \left( \overline{X} \right)
\overline{K} \right] = \tr \left[ \overline{K}^{\dag} p \left( \pi
\left( \overline{X} \right) \right) \overline{K} \right] \right\} .
\end{equation}

A final remark: we emphasise that $\mathcal{A}$ acts not on physical systems (or their labels), but rather on the abstract operator variables. This removes a source of confusion when constructing the symmetry group of the SDP relaxation. For example, in the RAC example from the main text, the re-labelling of the output $b$ cannot depend on $x$, as the operator $M^b_y$ does {\em not} have an $x$ index.

Next, we will consider the general construction of ambient groups for scenarios common in quantum information.
\subsection{Ambient groups in prepare-and-measure scenarios}

In the prepare-and-measure scenario outlined in the introduction, the set of operators has size $\mathcal{X} + \mathcal{B} \mathcal{Y}$. It is given by $\{\rho_x\} \cup \{M_y^b\}$ for the inputs $x=1,\ldots,\mathcal{X}$, the inputs $y=1,\ldots,\mathcal{Y}$ and the outputs $b=1,\ldots,\mathcal{B}$. We have the constraints
\begin{equation}
\rho_x \succeq 0, \quad M_y^b \succeq 0, \qquad \sum_b M_y^b = \mathbbm{1}\;,
\end{equation}
in addition to the generic constraints of the tracial moment hierarchy.

\begin{proposition}
	\label{prop:ambientprepmeas}
	In prepare-and-measure scenarios, elements of the ambient group $\mathcal{A}$ are uniquely enumerated by
	\[
	\pmb{a} = \pmb{\xi} ~ \pmb{\psi} ~ \pmb{\beta_1} \ldots \pmb{\beta_Y} \;,
	\]
	where $\pmb{\xi}$, $\pmb{\psi}$ and $\pmb{\beta_y}$ are permutation of the operators $\{ X_i \}$ defined as follows.
	\begin{itemize}
		\item The permutation $\pmb{\xi}$ corresponds to a re-labelling of the input $x$ and is parameterised by a permutation $\xi \in S_\mathcal{X}$.
		It acts as
		\[
		\pmb{\xi}(\rho_x) = \rho_{\xi(x)}, \qquad \pmb{\xi}(M_y^b) = M_y^b\;.
		\]
		\item The permutation $\pmb{\psi}$ corresponds to a re-labelling of the input $y$ and is parameterised by a permutation $\psi \in S_\mathcal{Y}$.
		It acts as
		\[
		\pmb{\psi}(\rho_x) = \rho_x, \qquad \pmb{\psi}(M_y^b) = M_{\psi(y)}^b\;.
		\]
		\item The permutation $\pmb{\beta_y}$ corresponds to a re-labelling of the output $b$ conditioned on the input $y$ and is parameterised by a permutation $\beta_y \in S_\mathcal{B}$.
		It acts as
		\[
		\pmb{\beta_y}(\rho_x) = \rho_x, \qquad \pmb{\beta_y}(M_y^b) = M_y^{\beta_y(b)}, \qquad \pmb{\beta_y}(M_{y'}^b) = M_{y'}^b \text{ if } y \ne y'\;.
		\]
	\end{itemize}
	The ambient group $\mathcal{A}$ has order $\mathcal{X}! ~ \mathcal{Y}!(\mathcal{B}!)^\mathcal{Y}$
\end{proposition}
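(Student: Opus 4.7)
The plan is to prove the proposition in two stages: first verify that each building block $\pmb{\xi}$, $\pmb{\psi}$, $\pmb{\beta_y}$ is indeed an ambient symmetry and that their composition therefore belongs to $\mathcal{A}$, then establish uniqueness by showing that every $\pi \in \mathcal{A}$ decomposes in exactly one such way. The counting of $|\mathcal{A}|$ will then follow from the uniqueness.

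For the first stage, I would take an arbitrary feasible realisation $(\bar{\rho}_x, \bar{M}_y^b)$ and check that under each of the three kinds of permutations the image $\pi(\bar{X})$ still satisfies $\bar{\rho}_x \succeq 0$, $\bar{M}_y^b \succeq 0$ and $\sum_b \bar{M}_y^b = \mathbbm{1}$. For $\pmb{\xi}$ and $\pmb{\psi}$ this is immediate because they merely relabel operators of the same type. For $\pmb{\beta_y}$ one observes that $\sum_b \bar{M}_y^{\beta_y(b)} = \sum_b \bar{M}_y^b = \mathbbm{1}$, since a permutation of outputs leaves the sum unchanged. Tracial constraints (or, more generally, the implicit constraints of the hierarchy) are symmetric in all indices and are preserved automatically. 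Composing such permutations still yields a permutation in $\mathcal{A}$ because $\mathcal{A}$ is a group.

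For the converse, the key observation is that the constraint set forces any $\pi \in \mathcal{A}$ to preserve the partition $\{\rho_x\} \sqcup \{M_y^b\}$. Indeed, the measurement operators are uniquely characterised as those appearing in one of the $\mathcal{Y}$ linear completeness relations $\sum_b M_y^b = \mathbbm{1}$, while the states $\rho_x$ do not satisfy such a relation; hence $\pi$ cannot mix the two. Moreover, inside the measurement set, the partition into blocks $\{M_y^\bullet\}$ indexed by $y$ is intrinsically defined by these completeness relations, so $\pi$ must send whole $y$-blocks to whole $y$-blocks, inducing a permutation $\psi \in S_\mathcal{Y}$, and must act within each block as a bijection of outputs $\{1,\ldots,\mathcal{B}\}$. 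This yields, for each $y$, a permutation $\beta_y \in S_\mathcal{B}$; restricted to states, $\pi$ yields $\xi \in S_\mathcal{X}$. One then checks that the composition $\pmb{\xi}\,\pmb{\psi}\,\pmb{\beta_1}\cdots\pmb{\beta_\mathcal{Y}}$ reproduces $\pi$ on every operator, giving $\pi(M_y^b) = M_{\psi(y)}^{\beta_y(b)}$ and $\pi(\rho_x) = \rho_{\xi(x)}$.

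Uniqueness of the decomposition then follows because each factor is recovered from $\pi$ by a distinct projection: $\xi$ from the action on states, $\psi$ from the induced map on measurement blocks, and $\beta_y$ from the within-block map at source index $y$. Counting yields $|\mathcal{A}| = \mathcal{X}!\,\mathcal{Y}!\,(\mathcal{B}!)^\mathcal{Y}$. The main obstacle is the rigorous separation of states from measurements: one has to argue precisely which constraints are considered ``structural'' (the completeness relation and the manifest typing of operators) versus which are merely generic and do not discriminate between the operator classes. Once the right list of defining constraints is fixed, the dichotomy $\{\rho_x\} \sqcup \{M_y^b\}$ and its refinement into blocks is automatic, and the rest of the argument is bookkeeping.
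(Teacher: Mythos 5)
Your proposal is correct and follows essentially the same route as the paper's (itself only sketched) proof: use the normalisation constraint $\sum_b M_y^b = \mathbbm{1}$ to show any ambient permutation must preserve the state/measurement partition and the $y$-block structure, then read off $\xi$, $\psi$, and the $\beta_y$'s to obtain the unique decomposition and the order count. You are also right to flag that a fully rigorous version must pin down which constraints are treated as structural; the paper glosses over this point exactly as you anticipated.
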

\begin{proof}\textit{(Sketch)}
	Due to the normalisation constraint, a valid permutation $\pmb{a}\in S_{\mathcal{X}+ \mathcal{B} \mathcal{Y}}$ cannot permute a state $\rho_x$ into a measurement $M_y^b$. Moreover, permutations of measurements have to preserve the block structure given by $\{M_1^b\},\ldots,\{M_\mathcal{Y}^b\}$.
	Thus, the ambient group is given by $\mathcal{A} = \mathcal{S} \times \mathcal{M}$, where $\mathcal{S}$ represents arbitrary permutations of states $\{\rho_x\}$ and $\mathcal{M}$ represents permutations of measurements.
	The group $\mathcal{S}$ is isomorphic to $S_\mathcal{X}$, the symmetric group of degree $\mathcal{X}$, which has order $\mathcal{X}!$.
	Elements of the group $\mathcal{M} = \{m\}$ can uniquely be written as the product of a permutation of inputs $\pmb{\psi}$, parameterised by $\psi \in S_\mathcal{Y}$, and permutations of outputs $\pmb{\beta_1}$, \ldots, $\pmb{\pi_\mathcal{Y}}$, parameterised by $\beta_y \in S_\mathcal{B}$. $\mathcal{M}$ has order $\mathcal{Y}! (\mathcal{B}!)^\mathcal{Y}$.
\end{proof}

Formally, the group $\mathcal{M}$, which preserves the block structure, is a {\em wreath product} of $S_\mathcal{B}$ by $S_\mathcal{Y}$~\cite{Renou2016}.

\subsection{Ambient groups in Bell scenarios}

For simplicity, we consider two-party Bell scenarios, which are written using the operators $\{A_{a|x}\}$ and $\{B_{b|y}\}$, for inputs $x,y=1,\ldots,m$ and outputs $a,b=1,\ldots,d$. This can easily be generalised to more parties. The constraints are:
\begin{equation}
A_{a|x} \succeq 0, \quad B_{b|y} \succeq 0, \qquad \sum_a A_{a|x} = \mathbbm{1}, \quad \sum_b B_{b|y} = \mathbbm{1}\;.
\end{equation}

\begin{proposition}
	In Bell scenarios, any valid permutation of operators is uniquely written
	\[
	\pmb{a} = \pmb{\xi} ~ \pmb{\psi} ~ \pmb{\alpha_1} \ldots \pmb{\alpha_m} ~ \pmb{\beta_1} \ldots \pmb{\beta_m}
	\quad
	\text{ or }
	\quad
	\pmb{a} = \pmb{\pi} ~ \pmb{\xi} ~ \pmb{\psi} ~ \pmb{\alpha_1} \ldots \pmb{\alpha_m} ~ \pmb{\beta_1} \ldots \pmb{\beta_m}
	\]
	where $\pmb{\pi}$ represents the swap of parties, $\pmb{\xi}$, $\pmb{\psi}$ are permutations of inputs and $\pmb{\alpha_x}$, $\pmb{\beta_y}$ are permutations of outputs with the following definitions.
	\begin{itemize}
		\item The permutation $\pmb{\pi}$ acts as:
		\[
		\pmb{\pi}(A_{a|x}) = B_{a|x}, \qquad \pmb{\pi}(B_{b|y}) = A_{b|y}\;.
		\]
		\item The permutation $\pmb{\xi}$ corresponds to a re-labelling of the input $x$ and is parameterised by a permutation $\xi \in S_m$.
		It acts as
		\[
		\pmb{\xi}(A_{a|x}) = A_{a|\xi(x)}, \qquad \pmb{\xi}(B_{b|y}) = B_{b|y}\;.
		\]
		\item The permutation $\pmb{\psi}$ corresponds to a re-labelling of the input $y$ and is parameterised by a permutation $\psi \in S_m$.
		It acts as
		\[
		\pmb{\psi}(A_{a|x}) = A_{a|x}, \qquad \pmb{\psi}(B_{b|y}) = B_{b|\psi(y)}\;.
		\]
		\item The permutation $\pmb{\alpha_x}$ corresponds to a re-labelling of the output $a$ conditioned on the input $x$ and is parameterised by a permutation $\alpha_x \in S_d$.
		It acts as
		\[
		\pmb{\alpha_x}(A_{a|x}) = A_{\alpha_x(a)|x}\qquad \pmb{\alpha_x}(A_{a|x'}) = A_{a|x'} \text{ if } x \ne x', \qquad \pmb{\alpha_x}(B_{b|y}) = B_{b|y}\;,
		\]
		\item The permutation $\pmb{\beta_y}$ corresponds to a re-labelling of the output $b$ conditioned on the input $y$ and is parameterised by a permutation $\beta_y \in S_d$.
		It acts as
		\[
		\pmb{\beta_y}(A_{a|x}) = A_{a|x}, \qquad \pmb{\beta_y}(B_{b|y}) = B_{\beta_y(b)|y}, \qquad \pmb{\beta_y}(B_{b|y'}) = B_{b|y'} \text{ if } y \ne y'\;.
		\]
	\end{itemize}
	The ambient group $\mathcal{A}$ has order $2 (m!)^2 ~ (d!)^{2m}$.
\end{proposition}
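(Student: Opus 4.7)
The plan is to mirror the proof sketch of Proposition~\ref{prop:ambientprepmeas} and carefully account for the additional ingredient specific to Bell scenarios, namely the possible swap of parties. Concretely, I would proceed in three stages: identify the block structure preserved by any ambient symmetry, decompose the stabiliser of that structure, and finally count.

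First I would argue that any $\pmb{a}\in\mathcal{A}$ must respect the partition of operators $\{A_{a|x}\}\cup\{B_{b|y}\}$ into ``Alice'' and ``Bob'' blocks. The normalisation constraints $\sum_a A_{a|x}=\mathbbm{1}$ and $\sum_b B_{b|y}=\mathbbm{1}$, together with the requirement that the permutation maps the constraint set to itself, force the $d$ operators of each maximal normalised set $\{A_{a|x}\}_a$ (respectively $\{B_{b|y}\}_b$) to be mapped collectively onto some other normalised set of the same size; since $d$ is fixed, such a set is again some $\{A_{a|x'}\}_a$ or $\{B_{b|y'}\}_b$. This induces a well-defined permutation on the set of ``input-blocks'' $\{x_A^{(1)},\ldots,x_A^{(m)},x_B^{(1)},\ldots,x_B^{(m)}\}$. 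Mapping an $A$-block to a $B$-block for some $x$ requires (by the same argument applied to all blocks) that \emph{every} $A$-block be sent to some $B$-block, hence the induced permutation on blocks is either ``Alice-to-Alice and Bob-to-Bob'' or it is a swap of the two party labels. This accounts for the $\mathbb{Z}_2$ factor realised by $\pmb{\pi}$.

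Next, I would fix the Alice-to-Alice case (the swapped case reducing to it by composition with $\pmb{\pi}$) and show that the stabiliser of the two-sided block structure is exactly $(S_d \wr S_m) \times (S_d \wr S_m)$, with one wreath product acting on Alice's operators and an independent one acting on Bob's. Within Alice's side, the permutation of input blocks is parameterised by some $\xi\in S_m$, realised by $\pmb{\xi}$, and inside each block $x$ one has an arbitrary permutation of the $d$ outputs, parameterised by $\alpha_x\in S_d$ and realised by $\pmb{\alpha_x}$. The wreath-product structure guarantees that every element can be written uniquely in the ordered form $\pmb{\xi}\,\pmb{\alpha_1}\cdots\pmb{\alpha_m}$, because once $\pmb{\xi}$ is fixed the residual permutation acts block-by-block and the $\pmb{\alpha_x}$ commute pairwise (they have disjoint support). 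The same argument applies verbatim to Bob's side, giving the factor $\pmb{\psi}\,\pmb{\beta_1}\cdots\pmb{\beta_m}$; since Alice's and Bob's factors commute, the canonical ordering in the statement is well defined and unique.

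Finally, a multiplicative count gives $|\mathcal{A}| = 2\cdot\bigl(m!\,(d!)^m\bigr)^2 = 2(m!)^2(d!)^{2m}$, matching the claim. The main obstacle I anticipate is the uniqueness of the normal form: one must rule out accidental coincidences between products of the named generators, which reduces to checking that the map from tuples $(\pi,\xi,\psi,\{\alpha_x\},\{\beta_y\})$ into $S_{2md}$ is injective. This follows because each factor acts nontrivially on a distinct subset of operator indices (a different input block, a different party, etc.), so composing them yields a permutation whose restriction to each block recovers the parameters uniquely; I would therefore spell out this restriction argument explicitly, after which the count and the structure are immediate.
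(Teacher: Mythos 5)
Your approach mirrors the paper's proof sketch: you identify the two-level block structure imposed by normalisation, isolate the party-swap as the $\mathbb{Z}_2$ factor, and recognise the per-party structure as a wreath product $S_d \wr S_m$, then count. The extra attention you pay to the uniqueness of the normal form (disjoint supports of the $\pmb{\alpha_x}$ and $\pmb{\beta_y}$, commutativity of Alice's and Bob's factors) is a welcome addition that the paper's sketch omits.

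There is, however, one concrete gap. The claim that mapping a single $A$-block to a $B$-block forces \emph{every} $A$-block to be sent to a $B$-block is not established by ``the same argument applied to all blocks.'' The normalisation constraints $\sum_a A_{a|x} = \mathbbm{1}$ and $\sum_b B_{b|y} = \mathbbm{1}$ only ensure that the $2m$ input-blocks are permuted among themselves; by themselves they are preserved by the full wreath product $S_d \wr S_{2m}$ of order $(2m)!\,(d!)^{2m}$, strictly larger than the claimed $2(m!)^2(d!)^{2m}$. What actually rules out a partial swap such as $\{A_{a|1}\}_a \leftrightarrow \{B_{b|1}\}_b$ (with the remaining blocks fixed) is the commutation relation $[A_{a|x},B_{b|y}]=0$, which holds between \emph{any} Alice operator and \emph{any} Bob operator but is not imposed between two of Alice's (or two of Bob's) operators with distinct inputs; under a partial swap a generically non-commuting pair would be mapped to a pair that must commute, so the feasible set $\Xi$ is not preserved. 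The paper defers to citations at exactly this point; since you are writing a self-contained argument, you should invoke this commutation constraint explicitly to close the $\mathbb{Z}_2$ step before moving to the per-party wreath-product decomposition.
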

\begin{proof} \textit{(Sketch)}
	Due to the normalisation, we need to preserve a two-level block structure. First, we can permute measurements of Alice and Bob provide we permute {\em all} of them.
	This corresponds to permutation of parties, a symmetry that has already been used in the literature~\cite{Moroder}.
	Then, we have two groups: the first one acts on the measurements of Alice only; the second one on the measurements of Bob only. The action of those groups on the set of concerned operators is exactly the same as in the prepare-and-measure case.
	For additional details about the symmetry groups of Bell scenarios, see \cite{50years, Renou2016}.
\end{proof}

Remark that to construct the ambient group for $n>2$ parties, we simply parameterise $\pmb{\pi}$ by an arbitrary permutation of parties taken from $S_n$ and add additional elements in the decomposition of $\pmb{a}$ corresponding to permutations of inputs/outputs of the additional parties.
The resulting group is then a {\em double} wreath product, of $S_d$ by $S_m$ by $S_n$ (see again~\cite{Renou2016}).

\section{Software package for symmetrisation: theory and practice}\label{appPackage}
We make our symmetrisation tools publicly available in a user-friendly manner by providing a MATLAB package. The package applies to all problems of the form \eqref{Eq:Problem}, in particular general Bell scenarios and distributed computations (not necessarily limited to two parties).   Relying on randomised sampling, it requires the following information from the user.

\begin{itemize}
	\item A random oracle that returns a generic sample of the operator products $\overline{X} \in \Xi$.
	
	\item A random oracle that returns a generic sample of the Kraus operator $\overline{K} \in
	\mathcal{K}$.
	
	\item A black box function $f \left( \overline{X}, \overline{K} \right)$
	that computes the objective $\tr \left[ \overline{K}^{\dag} p \left( \overline{X}
	\right) \overline{K} \right]$, as given in \eqref{Eq:Problem}.
	
	\item A bound $L$ on the degree of products of operators in the hierarchy,
	with the constraint that $p (X)$ has monomials of degree at most $2 L$.
	
	\item The generators of the symmetry group $\mathcal{G}$.
\end{itemize}
The user does not need to specify the constraints $\{ q_1 (X) = 0, \ldots \}$, but rather implement an oracle that samples realisations generically from the feasible set. If these constraints are provided, the package will use them to validate the symmetry group.

Our algorithm outputs a basis $(E_0, \{ E_1, \ldots, E_m \})$ of moment matrices in a block-diagonal basis, along with a real vector $\vec{b}$ such that the canonical semidefinite program
\begin{equation}
\label{Eq:SDPFormulation} \max_{\vec{y} \in \mathbbm{R}^m} \vec{b}^{\top} \cdot \vec{y} + b_0
\end{equation}
\[ \text{subject to} \qquad E_0 + \sum_{\ell = 1}^m y_{\ell} E_{\ell} \geq 0.
\]
provides an upper bound on the objective of the problem~(\ref{Eq:Problem}) under dimension (and possibly rank) constraints.

In the above, we assumed that the generators of the symmetry group $\mathcal{G}$ are known. The algorithm also works when a subset of those generators are provided, with a loss of efficiency --- when no generators are provided, our algorithm reduces to the standard NV hierarchy. However, if the ambient group~$\mathcal{A}$ is known instead, we provide a function that recovers the symmetry group from it, provided the size of $\mathcal{A}$ is small (say a few millions), as we simply filter the elements one by one. However, computing the symmetry group on a small representative of a problem can help the user to guess the form of the symmetry group for the general problem. This was exemplified in Example 2 of the main text, where only the cyclic symmetry can be immediately guessed.

Even when no symmetrisation is performed, our implementation improves on the original proposal of the NV hierarchy: we remove redundant monomials from the generating set, compute the samples in batches and pre-compute the contractions of monomials/Kraus operators.

\subsection{Four methods of symmetrisation}\label{secsec}

As seen in the main text, symmetrisation reduces the size of the basis. Afterwards, one can also block-diagonalise the moment matrix by a variety of techniques. In view of this, the MATLAB package is made available with four different symmetrisation methods (and one non symmetrised variant) :
\begin{itemize}
	\item {\ttfamily none:} Does not apply symmetrisation.
	\item {\ttfamily reynolds:} Averages the samples over the symmetry group by computing the Reynolds operator. This reduces the number of scalar variables in the SDP. It performs no block-diagonalisation.
	
	\item {\ttfamily isotypic:} In addition to reducing the number of scalar variables in the SDP via the Reynolds operator, it  identifies a partial block
	structure in $\{ E_0, E_1, \ldots \}$ (without multiplicities) after sampling and uses this to reduce the size of the positivity constraints.
	
	\item {\ttfamily irreps:} In addition to reducing the number of scalar variables in the SDP via the Reynolds operator, it decomposes the
	column space of $\{ E_0, E_1, \ldots \}$ into irreducible representations and performs a full block-diagonalisation after the samples are collected.
	
	\item {\ttfamily blocks:} Computes the irreducible representations of the symmetry group and uses these to sample directly in the block-diagonal basis, using an optimised version of the Reynolds operator.
\end{itemize}
Among these four methods, {\ttfamily reynolds} is the most elementary form of symmetrisation whereas {\ttfamily blocks} exploits the full potential of the symmetrisation technique.

\begin{table}[]
	\begin{tabular}{|l|l|l|l|l|l|l|l|l|l|l|}
		\hline
		& \multicolumn{5}{l|}{Dimension d = 3}                                                                        & \multicolumn{5}{l|}{Dimension d = 7}            \\ \hline
		Method   & Max. block size & Time: dec.. & ..basis & ..solver & Precision                                        & Max. block size  & Time: dec.. & ..basis & ..solver & Precision                           \\ \hline
		
		{\ttfamily none} & 70 &  & 0.1 & 1.0 & $3\cdot 10^{-13}$ & 750 & \multicolumn{4}{c|}{\text{\em Out of reach}}  \\ \hline
		{\ttfamily reynolds} & 70 & $0.04$ & $0.001$ & $0.1$ & $3\cdot 10^{-13}$
		& 750 & $2.8$ & $0.1$ & $61$ & $8\cdot 10^{-11}$ \\ \hline
		{\ttfamily isotypic} & 28 & $0.07$ & $0.0007$ & $0.04$ & $2\cdot 10^{-11}$
		& 180 & $4.8 ~ (3.9)$ & $0.01$ & $1.6$ & $1\cdot 10^{-9}$ \\ \hline
		{\ttfamily irreps} (N) & 7 & $0.09$ & $0.0004$ & $0.01$ & $5\cdot 10^{-13}$
		& 7 & $7.0 ~ (4.9))$ & $0.0004$ & $0.008$ & $1\cdot 10^{-10}$ \\ \hline
		{\ttfamily blocks} (N) & 7 & $0.08$ & $0.0004$ & $0.01$ & $2\cdot 10^{-13}$
		& 7 & $6.9 ~ (4.8)$ & $0.0004$ & $0.009$ & $7\cdot 10^{-12}$ \\ \hline
		{\ttfamily irreps} (A) & 7 & $0.05$ & $0.0004$ & $0.009$ & $1\cdot 10^{-12}$
		& 7 & $4.4 ~ (2.7)$ & $0.0006$ & $0.008$ & $1\cdot 10^{-10}$ \\ \hline
		{\ttfamily blocks} (A) & 7 & $0.05$ & $0.0003$ & $0.009$ & $7\cdot 10^{-14}$
		& 7 & $4.4 ~ (0.004)$ & $0.0004$ & $0.009$ & $2\cdot 10^{-12}$ \\ \hline
	\end{tabular}
	\caption{\label{tabImplementations} 
		Comparison of five implementations for the random access code (see section~\ref{RACs}), for $d=3,7$ and $n=2$, where each problem was solved 20 times. For the irreducible decomposition, we either used our numerical algorithm (N), or the analytical decomposition (A) provided in section~\ref{appIrreps}. All times are given in seconds: ``dec.'' corresponds to the construction and decomposition of the symmetry group and the numerical block-diagonalisation (in parentheses, consistency checks disabled), ``basis'' both to the computation of symmetrised moment matrices and the rank verification, ``solver'' to the time spent in the semidefinite programming solver, while times spent in the toolbox YALMIP are not presented. The precision is the average absolute deviation with respect to the correct objective. For this problem, we used MOSEK with a tolerance $\varepsilon = 0$, forcing the solver to iterate until no further progress is made.}
\end{table}

In Table~\ref{tabImplementations}, we compare the  five methods on the random access code (RAC) example of section~\ref{RACs}; this problem was already considered for the case of $n=2$ in~\cite{Sym8}, where the method they present corresponds to {\ttfamily reynolds}. Moreover, the cited work provides the analytical maximal value of $\mathcal{A}_{2,d}^{\text{RAC}}$ which we use the evaluate the numerical precision of our bounds.

Let us comment the impact of the successive refinements of our technique. First of all, symmetrisation of the moment matrix ({\ttfamily reynolds}) provides a large gain: it allows us to compute bounds for problems that were out of reach previously (such as our RAC example for $n=2$, $d=7$) are now within reach. 
Note that doing so only involves standard arithmetic (addition and multiplication), so no precision loss is observed on average. This step reduces the number of basis elements, but does not reduce the size of the blocks of the moment matrix. 
The next step is to block-diagonalise partially ({\ttfamily isotypic}) the moment matrix using the simple heuristic described in the main text. 
Doing so improves the computation time by an order of magnitude, at the price of a decrease in precision: both the basis construction and the solver efficiency is increased. We understand the loss of precision as coming from the computation of matrix eigenspaces.  We now move to the finest decompositions available ({\ttfamily irreps}, {\ttfamily blocks}). There, we compare the numerical basis obtained using our numerical algorithm and the analytical decomposition presented in section~\ref{appIrreps}.
In the $d=7$ example, we gain several orders of magnitudes in efficiency: this is not surprising as the final block sizes become independent of the dimension (see section~\ref{appIrreps} for a discussion of these block sizes).
We also regain some precision, to the point that the fully block-diagonalised problem provides increased precision compared to the less symmetrised variants: this can be due to a special refinement step that we incorporated in the decomposition algorithm explained in~\cite{Denis}.
Out of the two variants presented, {\ttfamily blocks} performs less arithmetic operations and provides a precision advantage as a result.
Note the existing literature~\cite{NV, NV2} did not address numerical precision, a problem we will consider in future work~\cite{Denis}.

We stress that we did not optimise the MATLAB implementation of our algorithms for group/representation decomposition and that by default the code performs safety checks at every step.
This explains why, for example, no gains in overall processing time are obtained going from {\ttfamily isotypic} to {\ttfamily irreps (N)} with checks enabled, or why we spend time performing a group decomposition when an explicit basis is provided ({\ttfamily irreps (A)}).
We present the timings with safety checks removed in parentheses, although we do not recommend the use of our software in that manner.

We now turn to the $I_{3322}(c=1)$ example of section \ref{Bell}, where the bound for qubits is known~\cite{NV2} to be $5$ up to machine precision and perform the same tests on that new problem. The results are presented in Table~\ref{tabImplementationsI3322}. Compared to the RAC example, where the symmetry group was big, the $I_{3322}$ inequality only has a symmetry group of order $8$; this translates as smaller decreases in block sizes. Here, using either {\ttfamily irreps} or {\ttfamily blocks} is always worthwhile in terms of precision and total computation time.


Finally, we remark that our block-diagonalisation method decomposes representations over the reals. Three types of irreducible representations appear, either real, complex or quaternionic. We present below in full detail the case of real representations, which is sufficient to handle all examples presented in this manuscript.
Our code also implements the decomposition of complex representations, and all symmetrization methods are supported for real and complex-type representations.
Adding support for quaternionic representations (which occur very infrequently) is left open; in that case, coarser methods such as "isotypic" should be used.

\begin{table}[]
	\begin{tabular}{|l|l|l|l|l|l|l|l|l|l|l|}
		\hline
		& \multicolumn{5}{l|}{NPA level 2 + AAA + BBB}                                                                        & \multicolumn{5}{l|}{NPA level 4}            \\ \hline
		Method   & Max. block size & Time: dec.. & ..basis & ..solver & Precision                                        & Max. block size  & Time: dec.. & ..basis & ..solver & Precision                           \\ \hline
		{\ttfamily none} & 52 & 0.04 & 0.04 & 0.5 & $1\cdot 10^{-10}$ & \multicolumn{5}{c|}{\text{\em Too slow}}  \\ \hline
		{\ttfamily reynolds} & 52 & $0.02$ & $0.003$ & $0.2$ & $8\cdot 10^{-11}$
		& 244 & $0.06$ & $1.14$ & $44$ & $5\cdot 10^{-10}$ \\ \hline
		{\ttfamily isotypic} & 26 & $0.04$ & $0.002$ & $0.08$ & $1\cdot 10^{-11}$
		& 122 & $0.2$ & $0.4$ & $9.3$ & $5\cdot 10^{-10}$ \\ \hline
		{\ttfamily irreps} (N)& 13 & $0.05$ & $0.002$ & $0.04$ & $8\cdot 10^{-12}$
		& 61 & $0.4$ & $0.2$ & $2.8$ & $5\cdot 10^{-10}$ \\ \hline
		{\ttfamily blocks} (N)& 13 & $0.05$ & $0.002$ & $0.04$ & $1\cdot 10^{-11}$
		& 61 & $0.4$ & $0.2$ & $2.9$ & $1\cdot 10^{-9}$ \\ \hline
	\end{tabular}
	\caption{\label{tabImplementationsI3322}
		Comparison of our five implementations for the $I_{3322}(c = 1)$ inequality (see also section~\ref{Bell}), for qubits and rank-1 projectors. To apply our method on small and medium-size relaxations, we used two different hierarchy levels. Column legends are the same as in Figure~\ref{tabImplementations} and apart from the problem, the computation settings are the same. Note that we did not compute an analytical decomposition of the group representation.}
\end{table}

\subsection{Improvements not related to symmetries}

We first discuss the non symmetrised variant {\ttfamily none}, as the other methods are based on it.
We pay special attention to the places where our implementation differs from the one presented in~\cite{NV,NV2}.

\subsubsection{Monomial generating set}
To construct the moment matrix $\Gamma$ from a sample $\left( \overline{X}, \overline{K} \right)$, we need to determine a list $(s_1 (X), \ldots, s_n (X))$ of products of operators $s_j = X_{i_1} X_{i_2} \ldots$ such that
\[
\Gamma_{j, k} = \tr \left[ \overline{K}^{\dag} ~ s_j (\overline{X} )^{\dag} ~ s_k( \overline{X} ) ~ \overline{K} \right],
\]
where all products of at most $L$ operators appear.
For numerical stability and group action identification purposes, we require $\{ s_j (X) \}$ to be duplicate-free.
For that purpose, we generate all possible products of a most $L$ operators and evaluate $s_j \left( \overline{X} \right)$ using a generic sample $\overline{X} \in \Xi$, keeping a single representative for each set of indices $\{ j_1, j_2, \ldots \}$ for which $s_{j_1} ( \overline{X} ) = s_{j_2} ( \overline{X} ) = \ldots$.
A small optimisation is to remove the duplicates at each step, generating sets of products of degree $2$, $3$, ..., until $L$ iteratively by adding a single element in the products.
From now on, we call $\{ s_1, \ldots, s_n \}$ the {\em monomial generating set} with each $s_j$ a monomial of degree at most $L$ and denote the indices of the $\{ s_j \}$ by $j \in \mathcal{J}= \{ 1, \ldots, n \}$.

\subsubsection{Sampling algorithm and consistency check}

We are now ready to describe the naive implementation of our symmetrisation algorithm.
As a parameter, it requires a block size $B$.
\begin{algorithm}[H]
	\caption{Computing a basis of the moment matrix subspace numerically}
	\label{algSampling}
	\begin{algorithmic}
		\State $\ell \leftarrow 0$
		\Repeat
		\For{$i=1,...,B$}
		\Comment{Compute a batch of samples, can be parallelised.}
		\State $\ell \leftarrow \ell + 1$.
		\State Sample $\overline{X}$ and $\overline{K}$ using the oracle.
		\For{$j \in \mathcal{J}$}
		\Comment{Precompute monomial-Kraus operator products.}
		\State $\hat{s}_j \leftarrow s_j \left( \overline{X} \right) \overline{K}$.
		\EndFor
		\For{$j,k \in \mathcal{J}$}
		\Comment{Compute the moment matrix elements.}
		\State $\Gamma^{(\ell)}_{j, k} \leftarrow \tr [\hat{s}_j^{\dag}  \hat{s}_k]$.
		\EndFor
		\State $p^{(\ell)} \leftarrow f \left( \overline{X}, \overline{K} \right)$.
		\Comment{Compute the objective value.}
		\EndFor
		\State $r \leftarrow \text{rank} \{ \Gamma^{(1)}, \ldots, \Gamma^{(\ell)} \}$
		\Comment{Rank test.}
		\Until{$\ell > r$}
	\end{algorithmic}
\end{algorithm}

At the end of the algorithm, the set $\{ \Gamma^{(1)}, \ldots, \Gamma^{(r)} \}$ provides a basis for the feasible affine space $\mathcal{F}$ of
moment matrices.
We then set $E_0 = \Gamma^{(1)}$, $E_{\ell} = \Gamma^{(\ell + 1)} - \Gamma^{(1)}$, $b_0 = p^{(1)}$ and $b_{\ell} = p^{(\ell + 1)} - p^{(1)}$ in the SDP formulation~(\ref{Eq:SDPFormulation}).
By construction, we have an extra sample $\Gamma^{(r + 1)}$ which we use for a consistency check.
As the space $\mathcal{F}$ is of rank $r$, there is a set of coefficients $\vec{c} \in \mathbbm{R}^r$ such that
\[
\Gamma^{(r + 1)} = \sum_{\ell = 1}^r c_{\ell} \Gamma^{(\ell)}   .
\]
By construction, the objective function depends linearly on the moment matrix. Thus we verify that
\[
p^{(r + 1)} = \sum_{\ell = 1}^r c_{\ell} p^{(r)}
\]
up to a tolerance $\varepsilon$.
If the test fails, it either means that the numerical precision is insufficient for the problem size, or that the upper bound $L$ on the degree is insufficient for the given objective.

\subsubsection{Efficiency improvements}

In all the cases considered in this manuscript (and most applications), every feasible moment matrix $\Gamma \in \mathcal{F}$ has its complex conjugate feasible as well, $\Gamma^{\ast} \in \mathcal{F}$.
In that case, we can replace any solution $\Gamma$ by the real part $\Re [\Gamma] = (\Gamma + \Gamma^{\ast}) / 2$, which we can do directly during sampling.
We also pre-compute the products $s_j \left( \overline{X} \right) \overline{K}$, which leads to a small gain of efficiency, in particular for problems involving pure states $\overline{K} = | \psi \rangle$.
For problems involving medium-sized sets of samples, we found the Gram-Schmidt orthonormalisation slower than rank computations.
Thus, we iteratively compute sets of additional samples of fixed size and add them to the basis in batches.
After each addition, we compute the rank of the new sample space until the basis is saturated, at which point we truncate it to the correct number of samples.
The optimal value of the number of samples $B$ per batch depends on the problem (in our examples, we used $B = 100$ as a starting point).
In any case, we want to use as little arithmetic as possible on the samples to minimise the loss of precision.

For the computation of the Bell inequality bounds, we considered separately different combinations of ranks for the projective measurements (remark that now the rank corresponds to the operator variables and not to the rank of the moment matrix as above).
To optimise the process, we can quickly rule out deterministic measurements (corresponding to degenerate projectors) by doing the following.
We fix, in turn, a single projector to be deterministic by direct modification of the objective polynomial and then compute the quantum bound of the inequality without dimension constraints.
When the resulting bound is lower than the best known quantum model, those deterministic projectors can safely be omitted in the search.
For some variants of $I_{3322}(c)$ (see section~\ref{Bell}) in dimension 4, this reduces the number of cases from $5^6 = 15625$ to $3^6 = 729$.

\subsection{Symmetrisation via {\ttfamily reynolds}}

The simplest form of symmetrisation amounts to identifying a number of symmetries and reducing the number of linearly independent sampled matrices in the NV hierarchy, without considering the possibility of block-diagonalisation.
This type of symmetrisation corresponds to the method {\ttfamily reynolds} in the presented MATLAB package.

\subsubsection{Permutations of monomials and symmetrisation}

Let $\pi \in \mathcal{G}$ be a symmetry of the problem, which acts on the
index set $\mathcal{I}$ of the operators $\{ X_i \}$. For a monomial $s =
X_{i_1, i_2, \ldots}$, we defined the action of $\mathcal{G}$ on $s$ as $\pi
(s) = X_{\pi (i_1), \pi (i_2), \ldots}$. As the degree of $s$ does not
increase under symmetry, for each monomial $s_j$ in the monomial set, there is
another monomial $s_{j'} = \pi (s_j)$ in that set. Thus, $\pi : \mathcal{I}
\rightarrow \mathcal{I}$ corresponds to a permutation $\varphi (\pi) :
\mathcal{J} \rightarrow \mathcal{J}$ of the monomial indices $\mathcal{J}$.
Before running our sampling, we pre-compute all images $\varphi (\mathcal{G}) =
\{ \varphi (\pi) : \pi \in \mathcal{G} \}$, so that the action of
$\mathcal{G}$ on $\Gamma$, with image $\pi (\Gamma)$, is written
\begin{equation}
\label{Eq:PermutationMatrix}
\pi (\Gamma) = M_{\pi} \Gamma M_{\pi}^{\dag}, \qquad (M_{\pi})_{j, k} =
\left\{ \begin{array}{ll}
1 & \text{if } j = [\varphi (\pi)] (k)\\
0 & \text{otherwise} .
\end{array} \right.
\end{equation}
where $M_{\pi}$ is a permutation matrix.  Now, given a moment matrix $\Gamma$, we compute its symmetrisation
$\Gamma' =\mathcal{R}_{\mathcal{G}} (\Gamma)$ as
\[
\Gamma' = \frac{1}{| \mathcal{G} |} \sum_{\pi \in \mathcal{G}} \pi (\Gamma) \;,
\]
and store $\Gamma'$ instead of $\Gamma$ in the sequence of samples.

\subsubsection{Identifying the symmetry group}

In case little, or nothing, is known about the group $\mathcal{G}$, one may resort to searching for symmetries using only the group $\mathcal{A}$ and randomised sampling,
replacing the definition~(\ref{Eq:SymmetryGroup})
\[ \mathcal{G}= \left\{ \pi \in \mathcal{A} \quad : \quad \tr \left[
\overline{K}^{\dag} p \left( \overline{X} \right) \overline{K} \right] =
\tr \left[ \overline{K}^{\dag} p \left( \pi \left( \overline{X}
\right) \right) \overline{K} \right] \right\} \]
for a single generic sample $\overline{X} \in \Xi$ and $\overline{K} \in
\mathcal{K}$. If necessary, the resulting group elements of $\mathcal{G}$ can
be checked for consistency by checking that they leave the objective invariant
for a second generic sample. This brute force approach is feasible for groups
$\mathcal{A}$ of size up to a few millions.

For bigger problems, an approach based on the permutation group algorithms
described in~{\cite{Holt2005}} can be used, but is not currently implemented.
We take the set of monomials present in $p (X)$ and complement it with their orbits under $\mathcal{A}$, removing duplicates from the result. Then we take a generic sample and evaluate those monomials in a vector $\vec{v}$, and
compute $\mathcal{G}$ as the subgroup of $\mathcal{A}$ that leaves $\vec{v}$
invariant up to some tolerance; this corresponds to the computation of a
partition stabiliser which can be performed efficiently for very large groups.

\subsubsection{Speeding up the computation of the Reynolds operator}
\label{Sec:SpeedingUp}

When $\mathcal{G}$ is large, a lot of time will be spent in the computation of the
sum~$\frac{1}{| \mathcal{G} |} \sum_{\pi \in \mathcal{G}} \pi (\Gamma)$. We
describe now a first way to speed it up. We call a {\em product decomposition} of
the group $\mathcal{G}$ a sequence of subsets $U_1, U_2, \ldots U_C$, so that
every element $\pi \in \mathcal{G}$ is uniquely written
\[ \pi = u_1 u_2 \ldots u_C, \qquad u_1 \in U_1, u_2 \in U_2, \ldots, u_C \in
U_C . \]
Following~\cite[Alg. 3.1.1]{Derksen2002}, the computation of the Reynolds
operator then reduces to
\begin{equation}
\label{ReynoldsImproved}
\mathcal{R}_G (\Gamma) = \frac{1}{| \mathcal{G} |} \sum_{u_1 \in U_1} M_{u_1} \left[ \sum_{u_2 \in U_2} M_{u_2} \left[ \ldots \left[ \sum_{u_C \in U_C} M_{u_C} \Gamma M_{u_C}^{\dag} \right] \ldots \right] M_{u_2}^{\dag} \right] M_{u_1}^{\dag},
\end{equation}
by linearity as $(u_1 u_2 \ldots u_n) (\Gamma) = u_1 (u_2 (\ldots (u_n(\Gamma))))$.
As $\mathcal{G}$ is a permutation group, a good decomposition is obtained by
computing a chain of stabilisers
\[ \mathcal{G} \supseteq \mathcal{G}_{(1)} \supseteq \mathcal{G}_{(1, 2)}
\supseteq \ldots \supseteq \mathcal{G}_{(1, 2 \ldots, | \mathcal{I} |)} \]
where $\mathcal{G}_S = \{ \pi \in \mathcal{G}: \forall i \in S, g (i) = i \}$
is the subgroup that fixes every index in $S$. We then take sets $\{ U_c \}$
from the coset transversals (see~{\cite{Holt2005}}). This computation can be
done efficiently from the generators of $\mathcal{G}$ using the randomised
Schreier-Sims algorithm~{\cite{Holt2005, Leon}}. These matters will be discussed in a future work~\cite{Denis}.

\subsubsection{Improvements to rank-constrained problems}
As a prerequisite, our symmetrisation method requires that if $\overline{X}$ is a sample, then $\pi(\overline{X})$ is a sample as well.
Thus, when considering rank constraints, we sample not only from a particular rank sequence, but also from all its permutations under the symmetry group.
For the $I_{3322}(c)$ example (see section~\ref{Bell}), our operators are $(X_1,X_2,X_3,X_4,X_5,X_6)= (A_1,A_2,A_3,B_1,B_2,B_3)$ and the rank sequence $\overline{r} = (r_1,r_2,r_3,r_4,r_5,r_6)$ corresponds to the number of eigenvalues equal to $+1$ for each of the measurements.
We remark that sampling from operators with rank sequence $\overline{r} = (2,2,2,1,1,1)$ is equivalent to sampling from operators with rank sequence $\overline{r} = (1,1,1,2,2,2)$ (for example) due to symmetries in the objective polynomial (here invariance under party permutation).
Thus, we only consider a single representative from the orbits of rank sequences under the symmetry group of the problem.

\subsection{Block-diagonalisation: elements of theory}
We start by reviewing the relevant mathematical notions: for a short introduction to the linear representation theory of finite groups, the reader can follow~{\cite{Serre1977}, see also \cite[Sec. 4]{Parrilo} for a summary of the notion applied to semidefinite programming.
	To match the formulation handled by most semidefinite programming solvers~\cite{complexsolvers}, including MOSEK~\cite{mosek}, we assume that the moment matrix $\Gamma$ is real and symmetric.
	Fortunately, this corresponds to most applications of moment relaxations in quantum information and to all examples presented in this manuscript. 
	In the rare case where a complex Hermitian $\Gamma$ is required, we assume that its reformulation as a real symmetric matrix~\cite[Ex. 4.42]{Boyd2004} has been done beforehand; the material below can then easily be adapted.
	
	We recall that the column space of the moment matrix $\Gamma$ is indexed by the monomials of the generating set $\mathcal{J}$.
	We write $V = \mathbb{R}^{|\mathcal{J}|}$ the column space of the moment matrix.
	Given a permutation $\pi\in\mathcal{G}$ of the operator variables, we defined in~\eqref{Eq:PermutationMatrix} the action of $\pi$ on $V$, which we wrote as a permutation matrix $M_\pi$.
	\subsubsection{Isotypic decomposition}
	From group representation theory, we know that there exists a change of basis matrix $U_\text{iso}$, so that the permutation matrix $M_\pi$ has the block diagonal form
	\[
	\tilde{M}_{\pi, \text{iso} } =  U^\top_\text{iso} M_\pi U_\text{iso} =
	\tilde{M}^1_{\pi, \text{iso}} \boxplus \ldots \tilde{M}^R_{\pi \text{iso}}
	\qquad \text{defining} \qquad X \boxplus Y =
	\begin{pmatrix} X & \\ & Y \end{pmatrix}\;,
	\]
	for arbitrary $\pi \in \mathcal{G}$, where the blocks $\tilde{M}^r_{\pi, \text{iso}}$ correspond to a decomposition of the vector space $V$:
	\begin{equation}
	\label{blkdiag_space}
	V = W^1 \oplus W^2 \oplus \ldots \oplus W^R\;,
	\end{equation}
	with the restriction that each {\em isotypic component} $W^r$ contains copies of a unique irreducible representation, for $R$ inequivalent irreducible representations ({\em irreps}).
	The block-diagonal form $\tilde{M}_{\pi, \text{iso}}$ highlights invariant subspaces of $V$.
	The basis vectors of these components form the columns of $U_\text{iso}$:
	\[
	U_\text{iso} = \left( \vec{w}^1_1, \ldots, \vec{w}^1_{\dim W^1}, \ldots, \vec{w}^R_1, \ldots \vec{w}^R_{\dim W^R}  \right) \;.
	\]
	so that $\big \{ \vec{w}^{r}_i \big \}$ are orthonormal basis vectors such that $W^r = \Span \big \{ \vec{w}^{r}_1, \ldots \vec{w}^r_{\dim W^r} \}$.
	The decomposition of $V$ into $\{ W^r \}$ is called the {\em isotypic decomposition} (see~{\cite[Sec. 2.6]{Serre1977}}) and is unique; the basis given by $U_\text{iso}$ is called the {\em isotypic basis}.
	It is a coarse-graining of the irreducible decomposition presented in the next section.
	The basis vectors are defined up to a unitary change of basis inside each component $W^r$.
	\subsubsection{Isotypic decomposition: impact on invariant symmetric matrices}
	We consider a real matrix $\Lambda \in \mathbb{R}^{|\mathcal{J}| \times |\mathcal{J}|}$ which satisfies:
	\begin{equation*}
	\label{Eq:CondLambda}
	\Lambda^\top = \Lambda, \qquad M_\pi^\top \Lambda M_\pi = \Lambda, \quad \forall \pi \in \mathcal{G}\;,
	\end{equation*}
	properties we denote respectively by $\Lambda$ being {\em symmetric} and {\em invariant under $\mathcal{G}$}.
	This is surely the case of the moment matrices after symmetrisation under the Reynolds operator (while some properties discussed here apply to non-symmetric matrices as well, our semidefinite programs and our numerical decomposition algorithm both employ symmetric matrices only).
	In the isotypic basis, we decompose $\tilde{\Lambda}_\text{iso} =  U^\top_\text{iso} \Lambda U_\text{iso}$ into blocks, each block corresponding to an isotypic subspace $W^r$:
	\begin{equation}
	\label{Eq:IsotypicBlockDiagonal}
	\tilde{\Lambda}_\text{iso} = U_\text{iso}^\top ~ \Lambda ~ U_\text{iso} =
	\begin{pmatrix}
	\tilde{\Lambda}^{1}_\text{iso} & 0 & \ldots & 0 \\
	0 & \tilde{\Lambda}^{2}_\text{iso} &        & 0 \\
	\ldots       &              &        & \ldots       \\
	0 & 0 &        & \tilde{\Lambda}^{R}_\text{iso}
	\end{pmatrix}\;,
	\end{equation}
	where $\tilde{\Lambda}^{r}_\text{iso} \in \mathbb{R}^{\dim W^r \times \dim W^r}$ and the off-diagonal blocks are zero by Schur's lemma.
	Each diagonal block $\tilde{\Lambda}^{r}_\text{iso}$ satisfies the invariance condition:
	\begin{equation}
	\label{Eq:LambdaTransforms}
	\tilde{\Lambda}^{r}_\text{iso} = (\tilde{M}_{\pi \text{iso}}^r)^\top ~ \tilde{\Lambda}^{r}_\text{iso} ~  \tilde{M}_{\pi  \text{iso}}^{r}\;.
	\end{equation}
	
	Now, let $G \in \mathbb{R}^{|\mathcal{J}| \times |\mathcal{J}|}$ be any symmetric real matrix and $\tilde{G}_\text{iso} = U_\text{iso}^\top ~ \Gamma ~ U_\text{iso}$ its form in the isotypic basis.
	We split $\tilde{G}_\text{iso}$ into blocks $\tilde{G}_\text{iso}^{i,j}$ according to the isotypic subspaces; as $\tilde{G}_\text{iso}$ is not invariant under the action of $\mathcal{G}$, its off-diagonal blocks are not necessarily zero.
	We now assume that $\Lambda$ comes from the projection of $G$ into the invariant subspace by the Reynolds operator of section~\ref{Sec:SpeedingUp}, $\Lambda = \mathcal{R}_\mathcal{G} (G)$.
	In the isotypic basis, we have:
	\begin{equation}
	\label{Eq:SymmetrisationBlock}
	\tilde{\Lambda}^{r}_\text{iso} = \frac{1}{|\mathcal{G}|} \sum_{\pi \in \mathcal{G}} (\tilde{M}^r_{\pi \text{iso}})^\top \tilde{G}^{r,r}_\text{iso} \tilde{M}^r_{\pi \text{iso}} \end{equation}
	Note that the form~\eqref{Eq:IsotypicBlockDiagonal} leads to efficient tests of semidefinite positiveness: the condition $\Lambda \geq 0$ is equivalent to $\tilde{\Lambda}_\text{iso} \geq 0$, which is efficiently written $\tilde{\Lambda}_\text{iso}^r \geq 0$ for all $r$.
	
	\subsubsection{Irreducible decomposition}
	The isotypic decomposition can be further refined.
	We can require of a change of basis $U_\text{irr}$ to decompose the permutation matrices $M_\pi$ as
	\begin{equation}
	\label{blkdiag_grp}
	\tilde{M} = U_\text{irr}^\top M_\pi U_\text{irr} = \underbrace{\tilde{M}_{\pi,1}^1 \boxplus \ldots \boxplus \tilde{M}_{\pi,m_1}^1}_{\tilde{M}_{\pi \text{iso}}^1} \boxplus \ldots \boxplus \underbrace{\tilde{M}_{\pi,1}^R \boxplus \ldots \boxplus \tilde{M}_{\pi,m_R}^R}_{\tilde{M}_{\pi \text{iso}}^R}\;,
	\end{equation}
	where, for each $r$, the $\{\tilde{M}_{\pi,i}^1\}_i$ express an irreducible representation of $\mathcal{G}$; the block matrices of the same irreducible representation are equivalent up to a similarity transformation (more on that below).
	Accordingly, the space $V$ splits each isotypic component $W^r$ into $m_r$ irreducible components:
	\begin{equation}
	\label{blkdiag_space}
	V = \underbrace{\big( V^{1}_1 \oplus \ldots \oplus V^{1}_{m_1} \big)}_{W^1} \oplus \ldots \oplus \underbrace{\big( V^{R}_{1} \oplus \ldots \oplus V^{R}_{m_R} \big)}_{W^R}\;,
	\end{equation}
	where $m_r$ is the multiplicity of the $r$-th irreducible representation and $d_r = \dim V^r_i$ its dimension.
	For each $r=1,\ldots,R$ and $i=1,\ldots,m_r$, we write $\{\vec{v}^r_{i,1}, \ldots, \vec{v}^r_{i,d_r}\}$ the basis vectors of $V^r_i$, which form the columns of the change of basis matrix $U_\text{irr} = \left( \vec{v}^{1}_{1,1}, \ldots, \vec{v}^{R}_{m_R, d_R} \right)$.
	The irreducible decomposition is stricter than the isotypic decomposition: each $U_\text{irr}$ provides a valid isotypic decomposition $U_\text{iso}$, but the converse is not true.
	The decomposition~\eqref{blkdiag_grp} is defined up to a change of basis in each component.
	For arbitrary orthonormal matrices $Y^r_i$, the following transformation 
	\begin{equation*}
	U_\text{irr}' = U_\text{irr} \big( \underbrace{Y_1^1 \boxplus \ldots \boxplus Y_{m_1}^1}_{\text{for } W^1} \boxplus \ldots \boxplus \underbrace{Y_1^R \boxplus \ldots \boxplus Y_{m_R}^R}_{\text{for } W^R} \big)
	\end{equation*}
	provides another orthonormal change of basis matrix that preserves the decomposition~\eqref{blkdiag_grp}.
	We can remove some degeneracy by picking, for each representation, matrices $\{Y_2^r, \ldots Y_{m_r}^r\}$ so that all $\tilde{M}^r_{\pi,i}$ have the same form $\tilde{M}^r_{\pi,i} = \tilde{M}^r_{\pi}$.
	We write $U$ a change of basis matrix that has the property
	\begin{equation}
	\label{blkdiag_grp1}
	\tilde{M} = U^\top M_\pi U = \underbrace{\tilde{M}_\pi^1 \boxplus \ldots \boxplus \tilde{M}_\pi^1}_{m_1 \text{ times} = \mathbbm{1}_{m_1} \otimes \tilde{M}^1_\pi} \boxplus \ldots \boxplus \underbrace{\tilde{M}_\pi^R \boxplus \ldots \boxplus \tilde{M}_\pi^R}_{m_R \text{ times} = \mathbbm{1}_{m_R} \otimes \tilde{M}^R_\pi}\;,
	\end{equation}
	where $\otimes$ is the Kronecker product (with the convention that $\mathbbm{1} \otimes X = X \boxplus \ldots \boxplus X$) and $\tilde{M}_\pi^r\in\mathbb{R}^{d_r \times d_r}$ corresponds to the blocks of $\tilde{M}_\pi$.
	This block-diagonal form of $\tilde{M}_\pi$ highlights again the invariant subspaces of $V$.
	Note that a finite group $\mathcal{G}$ has a finite number of irreducible linear representations over the reals.
	The question we will solve later is to identify which representations are present in $M_\pi$ and compute the change of basis matrix $U$.
	
	\subsubsection{Irreducible decomposition: impact on invariant symmetric matrices}
	As $U$ is a valid change of basis matrices for the isotypic decomposition, any symmetric invariant matrix $\Lambda$ still has the block diagonal form~\eqref{Eq:IsotypicBlockDiagonal}.
	Moreover, each isotypic block satisfies the invariance condition:
	\begin{equation*}
	\tilde{\Lambda} = U^\top ~ \Lambda ~ U = \tilde{\Lambda}^1 \boxplus \ldots \boxplus \tilde{\Lambda}^R, \qquad  \tilde{\Lambda}^r =  (\mathbbm{1}_{m_r} \otimes \tilde{M}^r_\pi)^\top ~ \tilde{\Lambda}^r ~ (\mathbbm{1}_{m_r} \otimes \tilde{M}^r_\pi), \quad \forall \pi \in \mathcal{G} \;.
	\end{equation*}
	Depending on the type of the representation $\tilde{M}^r_\pi$, the block $\tilde{\Lambda}^{r}$ will take different forms (see~\cite[13.2]{Serre1977}).
	For simplicity, we restrict our discussion to irreducible representations of real type.
	Irreducible representations are always of real type when $\mathcal{G}$ is {\em ambivalent}~\cite{footnote4,Armeanu}.
	Ambivalent groups include symmetric groups, dihedral groups and their direct products.
	Extensions of the technique and precision improvements will be presented in a future work~\cite{Denis}.
	For representations of real type, all blocks have the form $\tilde{\Lambda}^{r} = L^r \otimes \mathbbm{1}_{d_r}$ for a symmetric matrix $L^r \in \mathbb{R}^{m_r \times m_r}$:
	\begin{equation}
	\label{Eq:DecFine}
	\tilde{\Lambda}^{r} = L^r \otimes \mathbbm{1}_{d_r} =
	\begin{pmatrix}
	L^r_{1,1} \mathbbm{1}_{d_r} & L^r_{1,2} \mathbbm{1}_{d_r} & \ldots & L^r_{1,m_r} \mathbbm{1}_{d_r} \\
	L^r_{2,1} \mathbbm{1}_{d_r} & L^r_{2,2} \mathbbm{1}_{d_r} & \ldots & L^r_{2,m_r} \mathbbm{1}_{d_r} \\
	\ldots                &                       &        & \ldots \\
	L^r_{m_r, 1} \mathbbm{1}_{d_r} & L^r_{m_r,2} \mathbbm{1}_{d} & \ldots & L^r_{m_r,m_r} \mathbbm{1}_{d_r}
	\end{pmatrix}\;,
	\end{equation}
	where the $L^r$ do not have any restrictions beyond $(L^r)^\top = L^r$.
	The form~\eqref{Eq:DecFine} leads to further efficiency gains.
	The condition $\Lambda \geq 0$ is equivalent to $L^r \geq 0$ for all $r$, as $\mathbbm{1}_{d_r} \otimes L^r$ and $L^r \otimes \mathbbm{1}_{d_r}$ have the same eigenvalues (in fact, the difference between $\mathbbm{1}_{d_r} \otimes L^r$ and $L^r \otimes \mathbbm{1}_{d_r}$ is just a matter of convention in the enumeration of the basis vectors).
	
	Hence, the structure revealed by real linear representation theory of finite groups can be summed up by the following three equations:
	\begin{alignat}{2}
	V                     &= (\mathds{R}^{m_1}\otimes V^1)                 & \oplus \ldots \oplus & (\mathds{R}^{m_R}\otimes V^R) \;,     \label{blkdiag_space_abstract}\\
	U^\dagger ~M~ U=\tilde{M} &= (\mathbbm{1}_{m_1} \otimes \tilde{M}^1_\pi) & \boxplus \ldots \boxplus & (\mathbbm{1}_{m_R} \otimes \tilde{M}^R_\pi) \;,  \\
	U^\dagger~ \Lambda~ U=\tilde{\Lambda}&= (L^1 \otimes \mathbbm{1}_{d_1}) & \boxplus \ldots \boxplus & (L^R \otimes \mathbbm{1}_{d_R}) \;,
	\end{alignat}
	where all $V^r_i$ are isomorphic to $V^r$.
	
	Given an arbitrary symmetric matrix $G$, we obtain the symmetrised $\Lambda = \mathcal{R}_\mathcal{G}(G)$ by computing the Reynolds operator in two ways.
	First, we can apply the averaging sum described in section~\ref{Sec:SpeedingUp}.
	An efficient method is to take advantage of the form~\eqref{Eq:DecFine}.
	As the change of basis matrix is orthonormal, the projection to the symmetric subspace is orthogonal as well.
	Thus the coefficients of the blocks $L^r$ can be computed simply by averaging over the diagonal elements of each block in~\eqref{Eq:DecFine}:
	\begin{equation}
	\label{Eq:AverageBlock}
	L^r_{ij} = \frac{1}{d_r} \sum_k ~ (\vec{v}^r_{i,k})^\top ~ \Gamma ~ \vec{v}^r_{j,k} \;.
	\end{equation}
	
	\subsection{Symmetrisation exploiting  block-diagonalisation}
	We now describe step-by-step the construction of the three variants {\ttfamily isotypic}, {\ttfamily irreps} and {\ttfamily blocks} exploiting block-diagonalisation.
	
	\subsubsection{Partial block-diagonalisation: {\ttfamily isotypic}}
	
	We first work at the level of the isotypic subspaces $\{ W^r \}$ to provide a partial block-diagonalisation of the problem.
	We now present a simple recipe to discover the basis $U_\text{iso}$, inspired by~\cite{blkdiag, Maehara2010}.
	First, we obtain a generic random matrix $\Lambda$ satisfying the conditions~\eqref{Eq:CondLambda}.
	The procedure below requires $\Lambda$ to have well separated eigenvalues in a yet unknown basis (note that sampling such matrices from moment matrices would not work, as moment matrices often have additional structure).
	Thus, we sample a random symmetric matrix $G$ from the {\em Gaussian Orthogonal Ensemble} (GOE)~\cite{Anderson2009}, which are matrices whose entries are independently sampled from the normal distribution.
	Such matrices have well-separated, independently distributed eigenvalues whose distribution does not depend on a particular choice of basis.
	We obtain the desired matrix by symmetrising $\Lambda = \mathcal{R}_\mathcal{G}(G)$ according to the optimised Reynolds operator of section~\ref{Sec:SpeedingUp}.
	The following proposition will help us identify the isotypic basis $U_\text{iso}$.
	\begin{proposition}
		Let $\Lambda$ be a generic symmetric invariant matrix obtained by sampling from the GOE and applying the Reynolds operator.
		Generically, each eigenspace of $\Lambda$ is contained within a single isotypic subspace $W^i$.
	\end{proposition}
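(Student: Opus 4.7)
The plan is to combine the block structure forced by Schur's lemma with a genericity argument for the eigenvalues. First I would recall that, since $\Lambda$ is invariant under $\mathcal{G}$, Schur's lemma in its real form gives the decomposition
\begin{equation*}
U^\top \Lambda\, U \;=\; \bigboxplus_{r=1}^{R}\, \tilde{\Lambda}^r
\;=\;\bigboxplus_{r=1}^{R}\, L^r \otimes \mathbbm{1}_{d_r},
\end{equation*}
already established in the excerpt. Thus the spectrum of $\Lambda$ restricted to the isotypic component $W^r$ is exactly the spectrum of $L^r$, with each eigenvalue repeated $d_r$ times, and the corresponding eigenvectors sit entirely inside $W^r$. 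Therefore, the only way an eigenspace could straddle two isotypic components $W^r$ and $W^{r'}$ is if $L^r$ and $L^{r'}$ happen to share an eigenvalue.

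Second, I would argue that for $G$ drawn from the GOE and $\Lambda = \mathcal{R}_\mathcal{G}(G)$, the blocks $L^r$ and $L^{r'}$ almost surely have disjoint spectra whenever $r\neq r'$. The Reynolds operator is an orthogonal projection onto the invariant subspace, and orthogonal projection of a Gaussian vector gives a Gaussian vector supported on that subspace. Via the basis change $U$, this induces independent Gaussian distributions on the entries of each $L^r$ (with the symmetry constraint $(L^r)^\top=L^r$), since the different isotypic components correspond to orthogonal subspaces of the ambient invariant matrix space. In particular each $L^r$ is itself a (rescaled) GOE matrix, and the $\{L^r\}_r$ are mutually independent.

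Third, I would close the argument with a standard measure-theoretic genericity statement. For two independent random symmetric matrices with absolutely continuous entry distributions, the event ``$\mathrm{spec}(L^r) \cap \mathrm{spec}(L^{r'}) \neq \emptyset$'' is contained in the vanishing locus of the resultant polynomial in the entries of $L^r$ and $L^{r'}$, a proper algebraic subvariety; hence it has Lebesgue measure zero. Taking a union over the finitely many pairs $(r,r')$ still gives a null set. So, almost surely, all blocks $L^r$ have pairwise disjoint spectra, and each eigenspace of $\Lambda$ is confined to a single $W^r$, which is the desired conclusion.

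The main obstacle is the second step: establishing cleanly that projecting a GOE matrix through $\mathcal{R}_\mathcal{G}$ and then reading off the blocks in the $U$-basis really does yield independent nondegenerate Gaussian laws on each $L^r$. The key technical ingredient is that the map $G \mapsto (L^1,\ldots,L^R)$ is linear and surjective onto the space of tuples of real symmetric matrices of the required sizes, with orthogonal fibers; combined with orthogonal invariance of the GOE, this is enough to transfer absolute continuity. Everything else is Schur's lemma plus a null-set argument.
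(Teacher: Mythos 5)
Your argument is correct and lands on the same core ideas as the paper's proof, but follows a finer route. The paper reasons entirely at the isotypic level: it observes that each symmetrised block $\tilde{\Lambda}^r_\text{iso}$ is obtained by averaging only the $(r,r)$ diagonal block of $\tilde{G}_\text{iso}=U_\text{iso}^\top G\, U_\text{iso}$, and since the GOE is orthogonally invariant the diagonal blocks of $\tilde{G}_\text{iso}$ are mutually independent---hence the $\tilde{\Lambda}^r_\text{iso}$ are independent random matrices, and independent random blocks almost never share an eigenvalue. You instead descend to the irreducible form $\tilde{\Lambda}^r = L^r\otimes\mathbbm{1}_{d_r}$, reduce the spectrum of $\Lambda$ restricted to $W^r$ to the spectrum of $L^r$, identify the $L^r$ as independent rescaled GOE matrices via orthogonality of the projection $\mathcal{R}_\mathcal{G}$ in the Frobenius metric, and close with a resultant-polynomial genericity argument. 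Your genericity step is noticeably more rigorous than the paper's informal ``this happens almost never'' (you identify the bad event as a proper algebraic subvariety, hence Lebesgue-null), which is a genuine improvement. The trade-off is a slight loss of generality: the form $L^r\otimes\mathbbm{1}_{d_r}$ is only valid for real-type irreducible representations, whereas the paper's isotypic-level independence argument is type-agnostic and would carry over unchanged to complex and quaternionic irreps. Since the paper later restricts to real-type irreps anyway (ambivalent groups), this is harmless in context, but the proposition as stated does not assume real type, so strictly speaking the paper's route is the more general one.
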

	\begin{proof}
		For the proposition to be true, we need to show that eigenvalues are not repeated across isotypic subspaces and that possible multiplicities only occur within an isotypic component.
		Recall that $\tilde{G}_\text{iso}$ and $\tilde{\Lambda}_\text{iso}$ have the form
		\begin{equation*}
		\tilde{G}_\text{iso} =  U_\text{iso}^\top G U_\text{iso} =
		\begin{pmatrix}
		\tilde{G}_\text{iso}^{1,1} & \ldots & \tilde{G}_\text{iso}^{1,R} \\
		\ldots & & \ldots \\
		\tilde{G}_\text{iso}^{R,1} & \ldots & \tilde{G}_\text{iso}^{R,R}
		\end{pmatrix}, \qquad \tilde{\Lambda}_\text{iso} = U_\text{iso}^\top \Lambda U_\text{iso} =
		\begin{pmatrix}
		\tilde{\Lambda}_\text{iso}^{1} & \ldots & 0 \\
		\ldots & & \ldots \\
		0 & \ldots & \tilde{\Lambda}_\text{iso}^{R}
		\end{pmatrix}
		= \tilde{\Lambda}_\text{iso}^{1} \boxplus \ldots \boxplus \tilde{\Lambda}_\text{iso}^{R}\;,
		\end{equation*}
		and $\tilde{G}_\text{iso}^{r,r}$ are submatrices of a matrix sampled from the GOE and thus have independent, random and well separated eigenvalues.
		Note that the block $\tilde{\Lambda}_\text{iso}^r$ is obtained by symmetrising the corresponding block $\tilde{G}_\text{iso}^{r,r}$ by~\eqref{Eq:SymmetrisationBlock} and only that block.
		The resulting symmetrised blocks $\tilde{\Lambda}^{i}$ will see their eigenvalue distribution modified.
		However, eigenvalues are still distributed independently {\em between} blocks and thus different blocks cannot share the same eigenvalue, as this happens almost never.
		Thus, the eigenspaces of $\tilde{\Lambda}$ do not overlap the block boundaries.
	\end{proof}
	
	As the isotypic subspaces $W^i$ are composed of eigenspaces of $\tilde{\Lambda}_\text{iso}$, which are also the eigenspaces of $\Lambda$ itself, the unordered vectors composing the change of basis matrix $U_\text{iso}$ are obtained simply from the eigenvalue decomposition of $\Lambda = T D T^\top$, where $T^{-1} = T^\top$ and $D$ is diagonal.
	However, this decomposition does not identifies which eigenspaces belong to the same isotypic component.
	For that, it is sufficient to sample a {\em second} symmetric invariant matrix $\Lambda'$, compute $T^\top \Lambda' T$ and find the reordering of columns of $T$ that brings $\Lambda'$ into its block-diagonal form.
	As, generically, all off-diagonal blocks $\tilde{\Lambda}'^{i,j}_\text{iso}$ will be zero (and only those), this identifies the requested change of basis $U_\text{iso}$.
	
	After having obtained the change of basis matrix $U_\text{iso}$, we proceed as follows to sample the basis in the {\ttfamily isotypic} method.
	As in Algorithm~\ref{algSampling}, we compute at every step $\ell$ a symmetrised sample $\Gamma'$.
	However, we do not directly store $\Gamma'$ as a basis element.
	Rather, we compute $\tilde{\Gamma}'_\text{iso} = U_\text{iso}^\top ~ \Gamma' ~ U_\text{iso}$, which is block diagonal with blocks $\tilde{\Gamma}^{r}$ according to~\eqref{Eq:IsotypicBlockDiagonal}, and only store the resulting blocks.

	\subsubsection{Fine block-diagonalisation: finding the irreducible basis}
	\label{Sec:FineBlockDiagonalization}
	
	We now move to complete block-diagonalisation.
	We assume we already identified the isotypic components and know that we need to adjust the bases of the $r$-th isotypic component $W^r$ using a change of basis matrix $U_r$ to obtain the full change of basis matrix $U$:
	\begin{equation*}
	U = U_\text{iso} \big( U^1 \boxplus U^2 \boxplus \ldots \boxplus U^R \big )\;,
	\end{equation*}
	so that $U^\top M_\pi U$ is fully block-diagonal according to~\eqref{blkdiag_grp1}.
	Let us revisit the symmetrised sample $\Lambda$, which we transform in the isotypic basis:
	\begin{equation*}
	U_\text{iso}^\top ~ \Lambda ~ U_\text{iso} = \tilde{\Lambda}^1_\text{iso} \boxplus \ldots \tilde{\Lambda}^R_\text{iso}\;.
	\end{equation*}
	We are looking for change of basis matrices $\{U^r\}$, inside each isotypic component, such that the $r$-th block $(U^r)^\top \tilde{\Lambda}^r_\text{iso} U^r = \tilde{\Lambda}^{r}$ satisfies~\eqref{Eq:LambdaTransforms} and $\tilde{\Lambda}^r$ has the form~\eqref{Eq:DecFine}.
	We treat all isotypic components separately.
	For simplicity, we now focus on the first block $r=1$ and write $m = m_1$, $d = d_1$, $L = L^1$. Remember~\eqref{Eq:DecFine}:
	\begin{equation*}
	\tilde{\Lambda}^1 = L \otimes \mathbbm{1}_{d} =
	\begin{pmatrix}
	L_{11} \mathbbm{1}_{d} & L_{12} \mathbbm{1}_{d} & \ldots & L_{1m} \mathbbm{1}_{d} \\
	L_{21} \mathbbm{1}_{d} & L_{22} \mathbbm{1}_{d} & \ldots & L_{2m} \mathbbm{1}_{d} \\
	\ldots                &                       &        & \ldots \\
	L_{m1} \mathbbm{1}_{d} & L_{m2} \mathbbm{1}_{d} & \ldots & L_{mm} \mathbbm{1}_{d}
	\end{pmatrix} , \qquad L \in \mathbb{R}^{m \times m}.
	\end{equation*}
	
	We now use the properties of this form to discover the change of basis matrix from samples of the isotypic component $\tilde{\Lambda}^1_\text{iso}$.
	Let $L = T D T^\top$ be the eigenvalue decomposition of $L$, where $D= \text{diag}(\lambda_1, \ldots, \lambda_m)$.
	We directly obtain the eigenvalue decomposition of $\tilde{\Lambda}^1$ by writing $\tilde{\Lambda}^1 = (T \otimes \mathbbm{1}_d) (D \otimes \mathbbm{1}_d) (T^\top \otimes \mathbbm{1}_d)$.
	As $L$ comes originally from a generic sample and was then symmetrised using~\eqref{Eq:SymmetrisationBlock}, its eigenvalues are each repeated $d$ times but are otherwise distinct.
	As eigenvalues do not depend on a choice of basis, we can exploit that property.
	
	Given $\tilde{\Lambda}^1$, what is the family of bases in which it is diagonal?
	As $D \otimes \mathbbm{1}_d =   (T \otimes \mathbbm{1}_d) ~ \tilde{\Lambda}^1 ~ (T^\top \otimes \mathbbm{1}_d)$, one possible change of basis matrix is $(T \otimes \mathbbm{1}_d)$. However, remark that
	\begin{equation*}
	D \otimes \mathbbm{1}_d = \begin{pmatrix} \lambda_1 \mathbbm{1}_d & &  \\ & \ldots & \\ & & \lambda_m \mathbbm{1}_d \end{pmatrix}
	\qquad = \qquad
	\begin{pmatrix} Y_1^\top & &  \\ & \ldots & \\ & & Y_m^\top \end{pmatrix} 
	\begin{pmatrix} \lambda_1 \mathbbm{1}_d & &  \\ & \ldots & \\ & & \lambda_m \mathbbm{1}_d \end{pmatrix}
	\underbrace{\begin{pmatrix} Y_1  & &  \\ & \ldots & \\ & & Y_m \end{pmatrix}}_{Y}\;,
	\end{equation*}
	where $Y_i$ are arbitrary orthonormal matrices.
	Hence, the full class of solution are the $\{(T \otimes \mathbbm{1}_d) Y\}$, where $Y = Y_1 \boxplus Y_2 \boxplus \ldots \boxplus Y_m$ and the $Y_i$ are orthonormal matrices.
	
	Hence we can proceed as follows.
	Having obtained the isotypic change of basis $U_\text{iso}$ using the method of the previous section, we consider a first sample of the current isotypic component $\tilde{\Lambda}^1_\text{iso}$.
	We compute its eigendecomposition $P^\top ~ \tilde{\Lambda}^1_\text{iso} ~ P = D\otimes \id_d$. As we characterised the family of bases in which $\tilde{\Lambda}^1_\text{iso}$ is diagonal, we have the guarantee that
	\[
	P = U^1 ~ Y \qquad \text{ with } \qquad Y = Y_1 \boxplus Y_2 \boxplus \ldots \boxplus Y_m\;,
	\]
	where $U^1$ is the change of basis matrix we are looking for and the eigendecomposition algorithm will return a random choice for $Y$.
	We then then obtain a second sample $\hat{\Lambda}^1_\text{iso}$ of the current isotypic component and change its basis using $P$ (note the use of $\hat{\cdot}$ instead of $\tilde{\cdot}$).
	Due to the presence of $Y$ we obtain:
	\begin{equation}
	\label{Eq:NearlyBasis}
	P^\top ~ \hat{\Lambda}^1_\text{iso} ~ P = Y^\top ~ \underbrace{(U^1)^\top ~ \hat{\Lambda}^1_\text{iso} ~ U^1}_{\text{in the form~}\eqref{Eq:DecFine}} ~ Y = 
	\begin{pmatrix}
	\hat{L}_{11} (Y_1^\top Y_1) & \hat{L}_{12} (Y_1^\top Y_2) & \ldots & \hat{L}_{1m} (Y_1^\top Y_m) \\
	\hat{L}_{21} (Y_2^\top Y_1) & \hat{L}_{22} (Y_2^\top Y_2) & \ldots & \hat{L}_{2m} (Y_2^\top Y_m) \\
	\ldots                &                       &        & \ldots \\
	\hat{L}_{m1} (Y_m^\top Y_1) & \hat{L}_{m2} (Y_m^\top Y_2) & \ldots & \hat{L}_{mm} (Y_m^\top Y_m)
	\end{pmatrix} .
	\end{equation}
	For invariant matrices $(Y_1^\top \mathbbm{1}_d Y_1) = \mathbbm{1}_d$, thus the choice of $Y_1$ does not impact the form~\eqref{Eq:DecFine}: it will however change the matrices of the irreducible representation $\tilde{M}_\pi^1$, corresponding to the arbitrariness in the choice of its basis.
	Now, we force all copies to be expressed in the same basis by multiplying the matrix $P$ with a correction factor, which provides the desired $U^1$:
	\[
	U^1 = P ~ \Big ( \mathbbm{1} \boxplus (Y_2^\top ~ Y_1) \boxplus \ldots \boxplus (Y_m^\top ~ Y_1) \Big)\;,
	\]
	and by looking at the first row of blocks in the matrix $P^\top \hat{\Lambda}_\text{iso}^1 P$, we directly have access to $(Y_i^\top Y_1)$, up to a constant factor $\hat{L}_{i1}$ which is easily corrected, as $(Y_i^\top Y_1)$ is orthonormal.
	
	\subsubsection{Fine block-diagonalisation: {\ttfamily irreps}}
	
	Given a irreducible change of basis $U$, for the {\ttfamily irreps} method we perform our processing of the samples as follows.
	As in Algorithm~\ref{algSampling}, we compute at every step $\ell$ a symmetrised sample $\Gamma'$.
	However, we do not directly store $\Gamma'$ as a basis element.
	Rather, we compute $\tilde{\Gamma} = U^\top ~ \Gamma' ~ U$, which is block diagonal with blocks $\tilde{\Gamma}^{r}$, each of the form $\tilde{\Gamma}^{r} = L^r \otimes \mathbbm{1}_{d_r}$ according to~\eqref{Eq:DecFine}.
	Instead of taking an arbitrary copy of $L^r$ in the matrix, we get the resulting block from the average of all copies of $L^r$ present.
	As we no longer need to store multiple copies of the same block and can safely discard off-diagonal elements, the storage and computational requirements for the basis construction are dramatically decreased.
	
	\subsubsection{Sampling directly the blocks: {\ttfamily blocks}}
	
	Another technique is to sample directly from the blocks, bypassing the explicit evaluation of the Reynolds operator as in Section~\ref{Sec:SpeedingUp}.
	Let us compute the moment matrix $\Gamma$ directly in the block-diagonal basis, using the pre-computed $\hat{s}_{\alpha}$ of Algorithm~\ref{algSampling}:
	\begin{equation*}
	U^\top ~ \Gamma ~ U = \sum_{\alpha \beta} U_{\alpha j} \tr [\hat{s}_{\alpha}^{\dag}  \hat{s}_{\beta}] U_{\beta k} \;.
	\end{equation*}
	We pre-compute  $\omega_j = \sum_{\alpha} U_{\alpha j}  \hat{s}_{\alpha} = \sum_{\alpha} U_{\alpha j} s_j(\overline{X}) \overline{K}$, so that the element $(U^\top ~ \Gamma ~ U)_{j,k}$ is computed without much effort:
	\begin{equation*} (U^\top ~ \Gamma ~ U)_{j,k} = \tr [\omega_j^{\dag} \omega_k] \;. \end{equation*}
	
	Remember that $\Gamma$ has not been through the explicit Reynolds operator and is not invariant under $\mathcal{G}$.
	However, we can use the fast projection~\eqref{Eq:AverageBlock} and compute only the coefficients that are required without forming the complete moment matrix.
	We then proceed as with {\ttfamily irreps} to construct the symmetrised basis by storing the blocks $L^r$.
	
	\subsubsection{Impact of the methods on the RAC for $n = 2$ and $d = 3$}
	
	We consider the RAC example presented in Table~\ref{tabImplementations} for $d = 3$.
	For the choice of monomials corresponding to $\mathbbm{1}$, $\rho_x$, $M^b_y$ and $\rho_x M^b_y$, we obtain a generating set of size 70; thus, without block diagonalisation, the moment matrix has size $70 \times 70$.
	Without symmetrisation, the number of samples is $545$.
	The symmetry group has order $72$. Applying averaging under the Reynolds operator ({\ttfamily reynolds}) reduces the number of samples to $13$; this number of samples will not be reduced further, however the moment matrix can be block diagonalised.
	Applying the {\ttfamily isotypic} block diagonalisation, we identify blocks of size $2$, $3$, $4$, $5$, $12$, $16$ and $28$.
	Refining further ({\ttfamily irreps} or {\ttfamily blocks}), we split those blocks further and obtain a final block decomposition of sizes $1$, $1$, $3$, $3$, $4$, $5$ and $7$.
	As we see in the next section, both the number of samples and the block sizes of the finest decomposition do not depend on $d$.


		\section{Application to random access coding}\label{RACs}
		We exemplify the general symmetrisation technique by considering a generalisation to many inputs of the symmetrisation proposed in~\cite{Sym8} of the two-party computation task known as a random access code (RAC) \cite{Ambainis, TavakoliRACs}. In a RAC, a party Alice receives random inputs $x=x_1,\ldots,x_{n}\in [d]$, and another party Bob receives a random input $y\in[n]$. By receiving a $d$-dimensional quantum system $\rho_{x}$ from Alice, Bob measures $\{M_y^b\}_b$ with outcome $b\in[d]$, aiming to recover Alice's $y$'th input. The average  success probability is
		\begin{equation}\label{racscore}
		\mathcal{A}_{n,d}^{\text{RAC}}=\frac{1}{nd^n}\sum_{x,y}\Tr(\rho_{x} M_y^{x_y}).
		\end{equation}
		We apply a symmetrised semidefinite relaxation as described by the general recipe to upper bound $\mathcal{A}_{n,d}^{\text{RAC}}$ for any states and rank-one projective measurements. To this end, we first identify  generators of the symmetry group, i.e., the re-labellings of inputs/outputs of Alice and Bob that leave the problem invariant.   Due to the  simplicity of the objective function, the symmetries can be spotted by direct inspection.

		\begin{table}
			\centering
			\begin{tabular}{|c|c|c|c|c|c|}
				\hline
				& \multicolumn{2}{|c|}{$\#$ Basis elements} & \multicolumn{2}{|c|}{SDP (+ blkdiag) time (sec) } & \\
				\hline
				$(n,d)$ & \text{ standard } &  \text{sym} & \text{standard} & \text{sym} & Result  \\ [0.5ex]
				\hline
				(3,2) & 224  &  28 &  11  & 2  & 0.7887  \\
				(3,3) & 11380  & 82  & $>8.5\times 10^4$ & 4 & 0.6989  \\
				(3,4) & -  & 82 & - & 15  & 0.6474\\
				(3,5) & -  &  82  & - & 120 & 0.6131\\
				\hline
			\end{tabular}
			\caption{Comparison between symmetrised and standard implementation for RACs. The symbol "$-$" indicates that we were unable to perform a computation. Note that the reduction in the number of basis elements leads to an analogous reduction in the sampling time. 
			}\label{tablerac1}
		\end{table}

		We identify $n+1$ types of generators. In the following, $S_n$ denotes the symmetric group of degree $n$. The first type $\pmb{\xi}$ is parameterised by $\xi\in S_n$ and corresponds to a permutation of the indices in the input string $x_1,\ldots,x_n$, while correcting $y$. The remaining $n$ types $\pmb{\pi_1},\ldots,\pmb{\pi_n}$ are parameterised by  permutations $\pi_1,\ldots ,\pi_n\in S_d$   of the $d$ possible values of $x_1,\ldots,x_n$ respectively, while correcting $b$. Specifically,
		\begin{align}
		\label{eqRACSymmetries}\nonumber
		\pmb{\xi}(\rho_{x_1,\ldots, x_n}) &= \rho_{x_{\xi(1)},\ldots, x_{\xi(n)}},& \qquad \pmb{\xi}(M_y^b) &= M_{\xi(y)}^b,&\\\nonumber
		\pmb{\pi_1}(\rho_{x_1,\ldots, x_n}) &= \rho_{\pi_1(x_1), x_2,\ldots, x_n},& \qquad \pmb{\pi_1}(M_1^b) &= M_1^{\pi_1(b)},&\\\nonumber
		\vdots\\
		\pmb{\pi_n}(\rho_{x_1,\ldots, x_n}) &= \rho_{x_1,x_2,\ldots, \pi_n(x_n)},& \qquad \pmb{\pi_n}(M_n^b) &= M_n^{\pi_n(b)},&
		\end{align}
		and $\pmb{\pi_k}$ leaves  $M_l^b$  unaffected for $k\neq l$. By simple enumeration, we observe that any element in $\pi \in \mathcal{G}$, for given $d$, can be written as the composition of $n+1$ transformations $\pi = \pmb{\xi} \pmb{\pi_1}\ldots\pmb{\pi_n}$. These transformations  are compatible with the structure of the problem and leave the average success probability $\mathcal{A}_{n,d}^{\text{RAC}}$ invariant.

		Using these generators we have implemented the symmetrised relaxation and numerically block-diagonalised the collection of sampled moment matrices. The maximal quantum value of $\mathcal{A}_{n,d}^{\text{RAC}}$ in the case of $n=2$ is analytically known~\cite{Sym8}. This was previously used in Section \ref{secsec} to verify the numerical precision of our methods. Here, we focus on $n=3$ for which no analogous analytical result is known when $d>2$.  We choose the hierarchy level corresponding to a moment matrix generated by the products $\{\openone,\rho_x, M_y^b, \rho_x M_y^b\}$. In Table \ref{tablerac1} we  compare the computational requirements of the symmetrised and standard implementations. We find a dramatic reduction in the size of the sampled basis and a highly efficient subsequent SDP which straightforwardly overcomes the limitations encountered in \cite{TavakoliRACs}. As an illustration of the usefulness of block-diagonalisation, for $(n,d)=(3,5)$ the moment matrix is of size 2241 but is effectively treated as seven non-trivial blocks of size at most 448.

		\section{Application to Bell-inequality-based communication complexity problem with illustration of how to automatise the search for symmetries}
		Most correlation games involve reasonably complicated objective functions which have significant non-obvious symmetries that cannot easily be found by direct inspection. Therefore, it is important to consider two questions. 
		\begin{enumerate}[I]
			\item How useful is symmetrisation  when only a small number of symmetries are discovered?
			\item  How does one find (non-obvious) symmetries of any objective function in a given physical scenario? 
		\end{enumerate}
		We consider these matters in a distributed computation task \cite{BZ02, magic7, DM18} based on facet Bell inequalities \cite{cglmp}.

		\begin{table}
			\centering
			\scalebox{0.82}{
				\begin{tabular}{|c|c|c|c|c|c|c|c|c|c|}
					\hline
					& \multicolumn{3}{|c|}{$\#$ Basis elements} & \multicolumn{3}{|c|}{SDP (+ blkdiag) time (sec) } &\\
					\hline
					$d$ & \text{Standard} & \text{'Obvious' sym}  & \text{Full sym} & \text{Standard} & \text{'Obvious' sym}   & \text{Full sym} &  \text{Result}  \\ [0.5ex]
					\hline
					3 & 329 & 111   & 36 & 8 & 3 & 0.3 & 0.7287 \\
					4 &  1154 & 290  & 84 & 160 & 5 &  0.5 & 0.7432 \\
					5 &  3002 &  602 & 171 & 2100  & 30 & 1 & 0.7569 \\
					6 & 6497 & 1085 & 297 & 17000 & 150  & 2.5 & 0.8000  \\
					7 & - & 1775 & 482 & - & 650  & 7  & 0.8333 \\	
					\hline
			\end{tabular}}
			\caption{Comparison between standard implementation for $\mathcal{A}_d^{\text{facet}}$ and its symmetrised implementation using both the obvious symmetry and the full symmetry group.}\label{tab:cglmp}
		\end{table}

		Alice and Bob take random inputs  $x\in[2]_0$, $x_0\in[d]_0$ and  $y\in[2]_0$ respectively, where $[s]_0=\{0,\ldots, s-1\}$. Alice sends a $d$-dimensional system $\rho_{x,x_0}$ to Bob which he measures with $\{M_y^b\}$, where $b\in[d]_0$. The objective of the task is 
		\begin{equation}
		\mathcal{A}_d^{\text{facet}}=\frac{1}{4d}\sum_{k=0}^{\lfloor \frac{d}{2}\rfloor-1}c_k\sum_{x_0,x,y}\Tr\left[\rho_{x,x_0}\left(M_y^{f_0}-M_y^{f_1}\right)\right],
		\end{equation}
		where $c_k=1-2k/(d-1)$ and $f_j=x_0-xy-(-1)^{x+y+j}(k+j)$, for $j\in\{0,1\}$. The computations are modulo $d$. 
		
		There is one easily spotted symmetry, namely jointly shifting the value of $x_0$ and $b$. We write this as $\pmb{\pi^c}(\rho_{x,x_0})=\rho_{x,x_0+c}$ and $\pmb{\pi^c}(M_{y}^{b})=M_{y}^{b+c}$ for some $c\in[d]_0$, parameterised by a cyclic permutation of $d$ elements $\pi^c$. Considering only this 'obvious' symmetry, we address question (I) by considering the hierarchy level corresponding to products of the form $\{\openone, \rho_{x,x_0},M_y^b,\rho_{x,x_0}M_y^b,M_y^bM_{y'}^{b'}\}$, choosing rank-one projectors and implementing the NV hierarchy both with and without symmetry exploitation. The results in Table \ref{tab:cglmp} show that even this small symmetry group allows one to reduce the computational requirements of the problem many times over. Nevertheless, the advantages are much smaller than what was obtained 	for the RACs in section~\ref{RACs}. Therefore, we turn to question (II) and search for non-obvious symmetries. Using the MATLAB package \cite{package}, we  enumerated the elements of the ambient group for small $d$ and discovered that $\mathcal{A}_d^{\text{facet}}$ has a symmetry group of order $4d$, to be compared with the previous cyclic group of order $d$. We then generalised that group construction for all $d$. The elements of the symmetry group are constructed by considering all combinations of products of $\pmb{\pi^c}$ with either the group identity, one of the two additional symmetries 
		\begin{align}\nonumber
		&\pmb{\phi}(\rho_{x,x_0})=\rho_{\bar{x},d-1-x_0} & \pmb{\phi}(M_y^b)= M_y^{d-1-y-b} \\
		& \pmb{\varphi}(\rho_{x,x_0})=\rho_{x,d-\bar{x}-x_0} & \pmb{\varphi}(M_y^b)=M_{\bar{y}}^{d-1-b},
		\end{align}
		or the product of these two additional symmetries, where the bar-sign denotes bitflip. Implementing the NV hierarchy using the  full symmetry group (see Table~\ref{tab:cglmp}), we greatly improve on the results obtained with the obvious cyclic symmetries and straightforwardly overcome the computational limitations of \cite{magic7}.

		\section{Application to the dimension bounded  $I_{3322}$-like Bell inequality}\label{Bell}
		We consider bounding finite-dimensional quantum correlations in a Bell inequality test. We consider a modified version of the $I_{3322}$ Bell inequality (studied without symmetries in~\cite{NV2}):
		\begin{align}
		\label{eq:i3322}
		I_{3322}^{(c)}=c\Big(\langle A_1B_3\rangle+\langle A_3B_1\rangle-\langle A_2B_3\rangle-\langle A_3B_2\rangle\Big) + \nonumber \\
		\langle A_1\rangle\!+\!\langle A_2\rangle\!+\!\langle B_1\rangle\!+\!\langle B_2\rangle-\Big\langle\! (A_1+A_2)(B_1+B_2)\!\Big\rangle,
		\end{align}
		where $A_x$ and $B_y$, for $x,y=1,2,3$ are projective measurements with eigenvalues $\pm 1$ (which are optimal for binary outcomes).
		The local bound reads $I_{3322}^{(c)}\le 4c$. For $c=1$, we recover the original $I_{3322}$ inequality \cite{Froissart, Sliwa, CG}.
		For any value of $c$, this inequality is symmetric under the permutation of parties, which we write $\pmb{p}$: $\pmb{p}(A_z) = B_z$ and $\pmb{p}(B_z) = A_z$ for $z=1,2,3$, and under the correlated re-labelling of inputs and outputs $\pmb{r}$: $\pmb{r}(A_1)=A_2$, $\pmb{r}(A_2)=A_1$, and $ \pmb{r}(B_3) = -B_3$, while $A_3$, $B_1$ and $B_2$ are unaffected.
		By repeated composition, we obtain the symmetry group $G = \left \{ \pmb{\text{id}}, \pmb{p}, \pmb{r}, \pmb{pr}, \pmb{rp}, \pmb{prp}, \pmb{rpr}, \pmb{prpr}  \right \}$.
		
		We compute the quantum bound of~\eqref{eq:i3322} when $c=1,3/2,2$ and the dimension is bounded by $d=2,3,4$. We construct the relaxation according to the hierarchy level 4, which corresponds to a moment matrix of size $244\times244$. The space of symmetric moment matrices can be block-diagonalised to yield six blocks of size at most 61. Thanks to symmetrisation, one can reduce the number of rank combinations for the measurement operators from the original $(d+1)^6$ by discarding redundant combinations (see section~\ref{appPackage}). For each case we sample the considered measurements and pure states $\psi$ and compute the moment matrix $ \Gamma_{j,k}=\Big < s_j(X)^\dagger~ s_k(X) \Big >$ with $\left< S \right> = \left< \psi \middle | S \middle | \psi \right >$, for a product of operators $S$. We present the results in Table~\ref{tab:i3322}. The advantages due to symmetrisation enables us to efficiently evaluate the large number SDPs in the high hierarchy level \cite{CommentI3322}.

		\begin{table}
			\centering
			\scalebox{0.92}{
				\begin{tabular}{|c|c|c|c|c|c|c|c|c|}
					\hline
					\multicolumn{2}{|c|}{}	& \multicolumn{2}{|c|}{$\#$ Basis elements} & \multicolumn{2}{|c|}{SDP (+ blkdiag) time (sec) } & \\
					\hline
					$c$ & $d$ & \text{Standard} & \text{Sym} & \text{Standard} & \text{Sym} & Result \\ [0.5ex]
					\hline
					1  &2        & 1771 & 240 & 500 & 2 & 5.000 000 \\
					1  &3        & 3292     & 496 & 2900      & 6   & 5.000 000  \\			
					1  &4        & 4492    & 594 &  3500    & 10    & 5.003 502  \\
					1  &$\infty$ &      &     &      &     & 5.003 502 \\
					\hline 
				\end{tabular}
				
			}
			\begin{tabular}{|c|c|l|l|c|c|l|}
				\cline{1-3} \cline{5-7}
				c   & d & Result    &  & c & d & Result    \\ \cline{1-3} \cline{5-7} 
				1.5 & 2 &     6.250 000      &  & 2 & 2 & 8.013 177 \\ \cline{1-3} \cline{5-7} 
				1.5 & 3 &    6.354 110       &  & 2 & 3 & 8.050 117 \\ \cline{1-3} \cline{5-7} 
				1.5 & 4 & 6.380 669 &  & 2 & 4 & 8.075 937 \\ \cline{1-3} \cline{5-7} 
				1.5 & $\infty$  & 6.380 669 &  & 2 & $\infty$  & 8.075 938 \\ \cline{1-3} \cline{5-7} 
			\end{tabular}
			\caption{Comparison between symmetrised and standard implementations for $I_{3322}^{(c)}$ and dimension $d$. For $d=\infty$ we use the results of \cite{Sym7}. The number of basis elements and the solver time are reported for projective measurements of rank $\left\lfloor d/2 \right\rfloor$.  
			}\label{tab:i3322}
		\end{table}

	\section{Symmetrisation in a multiparty distributed computation}\label{appEx}
	Both the NV hierarchy and the symmetrisation technique straightforwardly extend to multipartite systems. In particular, due to the rapidly increasing computational requirements associated to increasing the number of parties, the use of symmetrisation is typically even more critical in such scenarios. Here, we exemplify the straightforward manner in which symmetrisation extends to multiparty scenarios, by considering  a distributed computation involving communicating parties that perform local transformations on an incoming state.

	Consider an $n+2$ party distributed computation involving parties $A_0,\ldots, A_{n+1}$, arranged in a line. The first party, $A_0$, receives random inputs $x_0,x_1\in[d]_0$, while  $A_1,\ldots,A_n$ independently receive random inputs $y_k\in[d]_0$. Party $A_{n+1}$ receives random inputs $z\in[2]_0$, $t\equiv t_1\ldots t_n\in [2]_0$ and produces an output $a\in[d]_0$. For $k\in[n+1]_0$, $A_k$ may only send a $d$-dimensional system to $A_{k+1}$. The task is fulfilled if $a=x_z+t\cdot y \mod{d}$, where $y=y_1,\ldots,y_n$. Denoting by $\rho_{x_0,x_1}^{y}$  the state that given to $A_{n+1}$, the average success probability is
	\begin{equation}
	\mathcal{A}_{n,d}^{\text{multi}}=\frac{1}{d^{n+2}2^{n+1}}\sum_{x_0,x_1,y,z,t}\Tr\left(\rho_{x_0,x_1}^y M_{z,t}^{x_z+t\cdot y}\right).
	\end{equation}
	For simplicity, we limit the transformations of the parties $A_1,\ldots,A_n$ to unitaries, $U_{k,y_k}$, and write $\rho_{x_0,x_1}^y=(U_{n,y_n}\ldots U_{1,y_1})\rho_{x_0x_1}(U_{n,y_n}\ldots U_{1,y_1})^\dagger$. We focus on the case of $A_{n+1}$ performing rank-one projective measurements.  We find several types of symmetries. Firstly, one may permute the labels of the inputs of $A_0$ while also permuting $z$. Secondly, one may cyclically permute the input $x_0$ ($x_1$) of $A_0$ while also permuting $b$ only if $z=0$ ($z=1$). Thirdly, for each of the parties $A_1,\ldots,A_n$, one may cyclically permute $y_k$ while also permuting $b$ only if $t_k=1$. These can be written   
	\begin{align}\nonumber
	& \pmb{\xi}(\rho_{x_0,x_1}^y)=\rho_{x_{\xi(0)},x_{\xi(1)}}^{y} & \pmb{\xi}(M_{z,t}^b)=M_{\xi(z),t}^b\\ \nonumber
	& \pmb{\pi_k}(\rho_{x_0,x_1}^y)=\rho_{x_0,x_1}^{\pi_k\cdot y} & 
	\pmb{\pi_k}(M_{z,t}^b)=\begin{cases}
	M_{z,t}^b  & \text{if } t_k=0 \\
	M_{z,t}^{\pi_k(b)} & \text{if } t_k=1
	\end{cases}\\
	& \pmb{\pi^0}(\rho_{x_0,x_1}^y)=\rho_{\pi^0(x_0),x_1}^{y}  & 
	\pmb{\pi^0}(M_{z,t}^b)=\begin{cases}
	M_{z,t}^{\pi^0(b)} & \text{if } z=0\\\nonumber
	M_{z,t}^b  & \text{if } z=1
	\end{cases}\\
	& \pmb{\pi^1}(\rho_{x_0,x_1}^y)=\rho_{x_0,\pi^1(x_1)}^{y}  & 
	\pmb{\pi^1}(M_{z,t}^b)=\begin{cases}
	M_{z,t}^b  & \text{if } z=0\\
	M_{z,t}^{\pi^1(b)} & \text{if } z=1
	\end{cases}
	\end{align}
	where  $\xi\in S_2$, $\pi_k, \pi^0, \pi^1$ are cyclic permutations of $d$ objects and $\pi_k\cdot y=(y_1,\ldots, \pi_k(y_k),\ldots y_n)$. Note that we have omitted a small number of additional symmetries, for example applying a non-cyclic permutation to $x_0$ and then applying the same permutation to $b$ given that $\forall k: t_k=0$ and $z=0$. A similar non-cyclic permutation can be made for $x_1$ and $b$. For simplicity, in our numerical implementation for this example, we have not exploited such symmetries.

	\begin{table}[h]
		\centering
		\begin{tabular}{|c|c|c|c|c|c|c|c|}
			\hline
			& \multicolumn{2}{|c|}{$\#$ Basis elements} & \multicolumn{2}{|c|}{SDP (+ blkdiag) time (sec) } & \\
			\hline
			$(n+2,d)$ & \text{Standard} & \text{Sym} &  \text{Standard} & \text{Sym} & Result \\ [0.5ex]
			\hline
			(5,2) & 2543 &  72 &  500 & 2 & 0.6250\\
			(6,2) & 10791  &  157  & - & 3 & 0.5884  \\
			(7,2) & $>2.9 \times 10^4 $  &  330  &  - & 10  & 0.5625  \\		
			(3,3) & $>2\times 10^4$  &  22  &  - & 2  & 0.6667   \\			
			(5,4) & -  & 651  &  - & 500  &  0.4375  \\			
			\hline
		\end{tabular}
		\caption{Comparison between symmetrised and standard implementation for $\mathcal{A}_{n,d}^{\text{multi}}$. }\label{tab:multi}
	\end{table}

	We have implemented the semidefinite relaxation with and without symmetries when considering  operators products of the form $\{\openone,\rho_{x_0x_1}, (\Pi_k U_{k,y_k})\rho_{x_0x_1}(\Pi_k U_{k,y_k})^\dagger, M_{z,t}^a \}$, for all $k=1,\ldots,n$ (see Table~\ref{tab:multi}). The block-diagonalisation method employed was the simple heuristic described in the main text.  We observe that symmetrisation dramatically reduces the computational requirements and allows for straightforward evaluation for cases involving many parties for which a standard method is found impractical.

	\section{Optimal symmetrisation of random access codes via irreducible decompositions of the representation}\label{appIrreps}

		Although the numerical approach to symmetrisation based on sampling is both highly efficient and simple to implement for specific problems, it provides little insight into the underlying reasons for the results it produces. Relevant such questions include; why the sample space is of a particular dimension, or how to interpret the blocks of the diagonalised SDP matrix, or how these properties evolve for a family of correlation scenarios. In order to answer such questions, more must resort to the more technically demanding issue of considering the symmetrisation problems by analytical means.  As an illustration of the insights provided by  such an analytical approach to symmetrisation, we derive the decomposition of the action of $\mathcal{G}$ into irreducible representations for the example of  RACs in section~\ref{RACs} for $n = 2$ and arbitrary $d$.

	\subsection{Overview}

	We consider the problem of optimally symmetrising, by fully analytical means, the family of RACs for $n=2$ and arbitrary $d$ for a hierarchy level corresponding to the operator products of the form  $\{\openone, \rho_{x_1x_2},M_y^b,\rho_{x_1x_2}M_y^b\}$.
	Note that symmetrisation by averaging over the Reynolds operator (method {\ttfamily reynolds}) in this family of RACs was already considered by numerical means, for a somewhat lower hierarchy level, in \cite{Sym8}.
	Here, we analytically find the full decomposition in irreps of the form Eq.~(\ref{blkdiag_space_abstract}) for any $d$.
	Table~\ref{Table_irreps} shows that seven irreps of various multiplicities appear in the irreps decomposition of $V$ (remember that $V$ is the column space of the moment matrix):
	\begin{equation}
	V= (\mathds{R}^5\otimes T)\oplus (\mathds{R}^3\otimes S)\oplus(\mathds{R}^7\otimes \phi)\oplus(\mathds{R}^4\otimes \pi_+)\oplus(\mathds{R}^3\otimes \pi_-)\oplus\Lambda\oplus\Omega\oplus\lambda\oplus\omega.
	\end{equation}
	Hence, as given by Eq.~(\ref{Eq:AverageBlock}), $\mathcal{R}(\Gamma)$ is defined by seven matrices of dimension $m_i^2$, for a total dimension 112.
	This shows that the dimension of the feasible set can directly be reduced to $112$, independently of the dimension $d$.
	Hence, the sampling technique explores a space of at most dimension $112$.
	Under the assumption that the dimension found with sampling should not decrease with $d$, it shows that this dimension should be stationary after some particular dimension $d^*$.
	In practice, for $d$ from 3 to 10, we obtained a further reduction of the dimension from $112$ to $13$.
	We conjecture this stationary property for any $n$, i.e. that $d^*=3$. 
	Explicit fully-analytical block-diagonalisation may be be a useful approach to tackle this conjecture : analysing in which irreps those 13 degrees of liberties are used is left for future work.

	\subsection{Symmetry adapted basis for semidefinite relaxation of high-dimensional RACs with $n=2$ }\label{explicit_decompo_RAC}
	
	\begin{table}
		\centering
		\begin{tabular}{|c|c|c|c|c|c|c|c|c|c|}
			\hline
			Irreps label $i$ & $T$ & $S$  & $\phi$ & $\pi_+$ & $\pi_-$ & $\Lambda$ & $\Omega$ & $\lambda$ & $\omega$  \\
			\hline
			Dimension $d_i$ & $1$ & $1$ & $2(d-1)$ & $(d-1)^2$ & $(d-1)^2$ & $(d-1)(d-2)$ & $d(d-3)$ & $(d-1)^2(d-2)$ & $d(d-1)(d-3)$    \\
			\hline
			Multiplicity in $\{\id\}$ & 1 & 0 & 0 & 0 & 0 & 0 & 0 & 0 & 0  \\
			Multiplicity in $\{\rho_{x_1x_2}\}$ & 1 & 0 & 1 & 1 & 0 & 0 & 0 & 0 & 0  \\
			Multiplicity in $\{M_b^x\}$ & 1 & 1 & 1 & 0 & 0 & 0 & 0 & 0 & 0  \\
			Multiplicity in $\{\rho_{x_1x_2}M_b^y\}$ & 2 & 2 & 5 & 3 & 3 & 1 & 1 & 1 & 1  \\
			\hline
			Total Multiplicity $m_i$ & 5 & 3 & 7 & 4 & 3 & 1 & 1 & 1 & 1  \\
			
			\hline
		\end{tabular}
		\caption{Irreps appearing into the decomposition of $V$ with dimension and multiplicities, in the domain of moment matrix monomials associated to $\{\id\}, \{\rho_{x_1x_2}\}, \{M_b^y\}$ and $\{\rho_{x_1x_2}M_b^y\}$. See Section~\ref{explicit_decompo_RAC} for definitions of each of these irreps.}\label{Table_irreps}
	\end{table}
	
	The standard basis of the corresponding $V$ is given by four blocks.
	the first one is of dimension 1, corresponding to $\{\id\}$. The second is of dimension $d^2$ and canonical basis $e_{x_1}^{X_1}\otimes e_{x_2}^{X_2}$ corresponding to $\{\rho_{x_1,x_2}\}$ for ${1\leq x_1,x_2\leq d}$. The third is of dimension $2d$ and canonical basis $e_{y}^{Y}\otimes e_{b}^{B}$ corresponding to $\{\rho_{y,b}\}$ for $1\leq b\leq d, y=1,2$. The last one is of dimension $2d^3$ and canonical basis $e_{x_1}^{X_1}\otimes e_{x_2}^{X_2}\otimes e_{y}^{Y}\otimes e_{b}^{B}$ corresponding to $\{\rho_{x_1,x_2}M_b^y\}$ for ${1\leq x_1,x_2,b\leq d}, y=1,2$.

	Let $\delta_\pm^Y=\frac{e_1^Y\pm e_2^Y}{\sqrt{2}}$. We express the symmetry adapted in terms of some known irreps of the symmetric group $S_d$.
	$S_d$ has a natural action over $\mathds{C}^d$ by permuting its canonical basis elements $\{e_x\}_{1\leq x \leq d}$. It decomposes into the trivial irrep $t$ of dimension 1 generated by $\delta_+=\frac{1}{\sqrt{d}}(\sum_x e_x)$ and the standard representation $\phi_1$, orthogonal to it. As usual in decomposition into irreducible representation, only the vectorial space matters, the choice of basis is necessary for computations but is arbitrary. An orthonormal basis of $\phi_1$ can be taken as $\xi_1\propto e_1-e_2$, $\xi_2\propto e_1+e_2-2e_3$, ..., $\xi_{d-1}\propto e_1+...+e_{d-1}-(d-1)e_d$. 
	We also introduce the notation $\delta_{ij}=\frac{e_i-e_j}{\sqrt{2}}$ and 
	$\alpha_{ij}=\delta_{ij}\otimes\delta_{ij}$.
	
	In the following, the representation $\phi_1\otimes\phi_1$, generated by the $\xi_i\otimes\xi_j$ also appears. 
	Its irreps decomposition under the action of $S_d$ is $\phi_1\otimes\phi_1=\Lambda\phi_1 \oplus t \oplus \phi_1 \oplus \theta$ where:
	\begin{itemize}
		\item $\Lambda\phi_1$ is the alternating square of $\phi_1$ of basis $\beta_{k}\propto \xi_i\otimes\xi_j-\xi_j\otimes\xi_i$, where $k=(i,j)$ and $1\leq i<j\leq d-1$. 
		\item $t\oplus\phi_1$ is a copy of the natural representation embedded into $\phi_1\otimes\phi_1$, with a canonical basis $\tilde{e}_k\propto \sum_{i\neq k}\alpha_{ik} - \sum_{k\neq i<j\neq k}\alpha_{ij}+\frac{d-4}{d}\sum_{i<j}\alpha_{ij}$. Basis $\tilde\delta_+$ of $t$ and $\tilde\xi_1 ... \tilde\xi_{d-1}$ can be obtained from the $\tilde{e}_k$ with the same formal expressions as previously (just adding tildes). 
		\item $\theta$ is a last irreps of dimension $d(d-3)/2$. A basis $u_k$ can be obtained by orthogonality.
	\end{itemize}
	
	We now give the decomposition of the four blocks independently.
	\\
	\\
	\textit{(i)} \textbf{ Block} $\{\id\}$
	\\
	This gives a first trivial representation $T$
	\\
	\\
	\textit{(ii)}\textbf{ Block} $\{\rho_{x_1x_2}\}$
	\\
	It decomposes as $T\oplus\phi\oplus\pi_+$, with:
	\begin{align}
	& T \text{ of basis } \delta_+^{X_1}\otimes\delta_+^{X_2}, & \pi_+ \text{ of basis } \xi_i^{X_1}\otimes\xi_i^{X_2}, && \phi \text{ of basis } \xi_i^{X_1}\otimes\delta_+^{X_2}, \delta_+^{X_1}\otimes\xi_i^{X_2}.
	\end{align} 
	\\
	\textit{(iii)}\textbf{ Block} $\{M_b^y\}$
	\\
	It decomposes as $T\oplus\phi\oplus S$, with:
	\begin{align}
	& T \text{ of basis } \delta_+^{Y}\otimes\delta_+^{B}, & S \text{ of basis } \delta_-^{Y}\otimes\delta_+^{B} && \phi \text{ of basis } e_1^{Y}\otimes\xi_i^{B}, e_2^{Y}\otimes\xi_i^{B}.
	\end{align}
	\\
	\textit{(vi)}\textbf{ Block} $\{\rho_{x_1x_2} M_y^b\}$
	\\
	Remark that this new representation is obtained as the tensor of the previous one. Hence, it can be already partially decomposed into $T\otimes T\oplus T\otimes S\oplus T\otimes\phi\oplus \pi_+\otimes T\oplus \pi_+\otimes S\oplus \pi_+\otimes \phi\oplus \phi\otimes T\oplus \phi\otimes S\oplus \phi\otimes \phi$. The terms tensor-ed with $T$ are already irreps.
	
	It decomposes as $T^{\otimes 2}\oplus\phi^{\otimes 5}\oplus\pi_+^{\otimes 3}\oplus\pi_-^{\otimes 3}\oplus\Lambda\oplus\Omega\oplus\lambda\oplus\omega$, with:
	\begin{itemize}
		
		\item One $T$, one $\pi_+$, one $S$, two $\phi$ coming from $\alpha\otimes\beta$ where $\alpha$ or $\beta$ is $T$. A basis is obtained by tensorisation of the basis of $\alpha$ and $\beta$.
		
		\item  $S\otimes \pi_+$ is irreducible and called $\pi_-$. A basis is obtained by tensorisation.
		
		\item  $\phi\otimes S$ is isomorphic to $\phi$, with a symmetry adapted basis $\xi_i^{X_1}\otimes\delta_+^{X_2}\otimes\delta_-^{Y}\otimes\delta_+^{B}, -\delta_+^{X_1}\otimes\xi_i^{X_2}\otimes\delta_-^{Y}\otimes\delta_+^{B}$.
		
		\item  $\pi_+\otimes \phi$ decomposes as $\pi_+\otimes  \phi=\phi\oplus\pi_+\oplus\pi_-\oplus\lambda\oplus\omega$. A basis of $\pi_+\otimes \phi$ can be obtained by tensorisation. For simplicity in the notations, we first do the following identification:
		$\xi_i^{X_1}\otimes\xi_j^{X_2}\otimes e_k^{Y}\otimes\xi_l^{B}\cong e_l \xi_i\xi_k\xi_j$, in which we omitted the tensor products for compactness.
		In the following, we group $\xi_i\xi_k$ for $l=1$ and $\xi_k\xi_j$ for $l=2$ to form the representation $\phi_1\otimes\phi_1=\Lambda\phi_1 \oplus t \oplus \phi_1 \oplus \theta$. Hence we obtain basis vectors $\tilde\beta_{i}, \tilde\delta_+, \tilde\xi_i$ and $\tilde u_k$ which are created out of $\xi_i\xi_k$ for $l=1$ and $\xi_k\xi_j$ for $l=2$. Then, we find the following:
		\begin{itemize}
			\item A copy of $\phi$ generated by the $e_1\tilde{\delta}_+\xi_j$, $e_2\xi_i\tilde{\delta}_+$.
			\item A copy of $\pi_+$ generated by the $\propto e_1\tilde\beta_{i}\xi_j+e_2\xi_i\tilde\beta_{j}$.
			\item A copy of $\pi_-$ generated by the $\propto e_1\tilde\beta_{i}\xi_j-e_2\xi_i\tilde\beta_{j}$.
			\item A copy of $\lambda$ generated by the $e_1\tilde{\beta}_i\xi_j$, $e_2\xi_i\tilde{\beta}_j$.
			\item A copy of $\omega$ generated by the $e_1\tilde{u}_i\xi_j$, $e_2\xi_i\tilde{u}_j$.
		\end{itemize}
		
		\item  $\phi\otimes\phi$ decomposes as $T\oplus S\oplus\phi\oplus\pi_+\oplus\pi_-\oplus\Lambda\oplus\Omega$. 
		A basis of $\phi\otimes\phi$ can be obtained by tensorisation. For simplicity in the notations, we first do the following identification:
		$\xi_i^{X_1}\otimes\delta_+^{X_2}\otimes e_k^{Y}\otimes\xi_j^{B}\cong e_1 e_k\xi_i\xi_j$
		and $\delta_+^{X_1}\otimes\xi_i^{X_2}\otimes e_k^{Y}\otimes\xi_j^{B}\cong e_2 e_k\xi_i\xi_j$, in which we omitted the tensor products for compactness.
		Then $\phi\otimes\phi$ contains the following irreps:
		\begin{itemize}
			\item A copy of $\pi_+$ generated by the $e_1e_2\xi_i\xi_j+e_2e_1\xi_j\xi_i$.
			\item A copy of $\pi_-$ generated by the $e_1e_2\xi_i\xi_j-e_2e_1\xi_j\xi_i$.
		\end{itemize}
		Remark that the remaining vectors are of the form $e_ke_k\xi_i\xi_j$: the decomposition $\phi_1\otimes\phi_1=\Lambda\phi_1 \oplus t \oplus \phi_1 \oplus \theta$ now appears. We write $\tilde\beta_{i}, \tilde\delta_+, \tilde\xi_i$ and $\tilde u_k$ the corresponding basis constructed out of this $\xi_i\xi_j$ as explained before.Then, we find:
		\begin{itemize}
			\item A copy of $T$ generated by the $e_1e_1\delta_++e_2e_2\delta_+$.
			\item A copy of $S$ generated by the $e_1e_1\delta_+-e_2e_2\delta_+$.
			\item A copy of $\phi$ generated by the $e_ke_k\tilde{\xi}_i$.
			\item A copy of $\Lambda$ generated by the $e_ke_k\tilde{\beta}_{i}$.
			\item A copy of $\Omega$ generated by the $e_ke_k\tilde{u}_{i}$.
		\end{itemize}

	\end{itemize}
	
	Finally, note that this analytical block decomposition is numerically implemented in the software package.

		\section{Proof of self-test of SIC-POVM}
		In this section, we prove the self-testing result of the main text, i.e., we derive the implications of observing the maximal quantum value ($W^Q_{d}$) of the witness
		\begin{equation}\label{witness}
		W_{d}=\underbrace{\sum_{x<x'}\left[P(b=0|x,(x,x'))+P(b=1|x',(x,x'))\right]}_{\equiv T}+\underbrace{\sum_{x=1}^{N}P(o=x\lvert x,\mathbf{povm})}_{\equiv R}.
		\end{equation} 
		We will show that for $N=d^2$, finding $W_{d}=W_d^Q$ implies that Alice prepares $N$ pure states $\rho_x=\ketbra{\psi_x}{\psi_x}$ such that  
		\begin{equation}\label{rel}
		\lvert \braket{\psi_x}{\psi_{x'}}\lvert ^2 =\frac{1}{d+1}
		\end{equation}
		for $x\neq x'$, and that the setting $\mathbf{povm}$ of Bob corresponds to a SIC-POVM. That is, the measurement can be written as $\{\frac{1}{d}\ketbra{\psi_x}{\psi_x}\}_{x=1}^{d^2}$.


		We begin by focusing on the first sum in \eqref{witness}, and later take the second sum into account. In quantum theory, the maximal value of the first sum in \eqref{witness}  reads
		\begin{align}
		T^Q&\equiv\max_{\{\rho\}, \{M\}} \sum_{x<x'}\left[P(b=0|x,(x,x'))+P(b=1|x',(x,x'))\right]\\
		&= \max_{\{\rho\}, \{M\}} \sum_{x<x'} \Tr\left[(\rho_x-\rho_{x'})M_{(x,x')}^0\right]+\binom{N}{2}= \max_{\{\rho_x\}} \sum_{x<x'}  \lambda_+\left[\rho_x-\rho_{x'}\right]+\binom{N}{2},
		\end{align}
		where we have optimally chosen $M^0_{(x,x')}$ to be the projector onto the positive eigenspace of $\rho_x-\rho_{x'}$, and by $\lambda_+$ denoted the sum of all positive eigenvalues. However, since $W_{d}$ is a linear combination of probabilities obtained over a bounded Hilbert space, the optimal preparations are pure states ($\rho_x=\ketbra{\psi_x}{\psi_x}$). Consequently, for optimal preparations, the operator $\rho_x-\rho_{x'}$ has at most one positive eigenvalue. Hence,
		\begin{equation}
		T^Q=\max_{\{\psi_x\}}  \sum_{x<x'} \lambda_{\text{max}}\left[\ketbra{\psi_x}{\psi_x}-\ketbra{\psi_{x'}}{\psi_{x'}}\right]+\binom{N}{2}.
		\end{equation}
		A pair of states $\ket{\psi_x}$ and $\ket{\psi_{x'}}$ can be effetively parameterised by qubits embedded in a $d$-dimensional Hilbert space. Applying a suitable unitary, we can write two such states as $\ket{\psi}=\ket{0}$ and $\ket{\phi}=\alpha\ket{0}+\beta\ket{1}$ for some complex coefficients $\alpha$ and $\beta$ with $\rvert \alpha\lvert ^2+\rvert \beta\lvert ^2=1$. Solving the characteristic equation $\det\left[\ketbra{\psi}{\psi}-\ketbra{\phi}{\phi}-\lambda\openone\right]=0$, one finds the eigenvalues  $\lambda=\pm |\beta|=\pm \sqrt{1-|\alpha|^2}$. Thus we have
		\begin{equation}
		\lambda_{\text{max}}[\ketbra{\psi}{\psi}-\ketbra{\phi}{\phi}]=\sqrt{1-|\braket{\psi}{\phi}|^2}.
		\end{equation}
		Consequently, 
		\begin{equation}\label{Tq}
		T^Q=\max_{\{\psi_x\}} \sum_{x<x'} \sqrt{1-|\braket{\psi_x}{\psi_{x'}}|^2}+\binom{N}{2}.
		\end{equation}
		We  can now apply the following concavity inequality: for $s_i\geq 0$ and a positive integer $n$, it holds that
		\begin{equation}\label{conc}
		\sum_{i=1}^n \sqrt{s_i}\leq \sqrt{n\sum_{i=1}^{n} s_i},
		\end{equation}
		with equality if and only if all $s_i$ are equal. Applying this to \eqref{Tq} leads to
		\begin{equation}\label{upperbound}
		T^Q\leq \max_{\{\psi_x\}} \sqrt{\binom{N}{2}^2-\binom{N}{2}\sum_{x<x'} |\braket{\psi_x}{\psi_{x'}}|^2}+\binom{N}{2}.
		\end{equation}
		We must now minimise the sum under the square-root. To this end, we write it as 
		\begin{equation}
		\sum_{x<x'} |\braket{\psi_x}{\psi_{x'}}|^2=\frac{1}{2}\left[\sum_{x',x} |\braket{\psi_x}{\psi_{x'}}|^2-N\right].
		\end{equation}
		However, since $\ket{\psi_x}$ is unconstrained other than being of dimension $d$, the sum appearing on the right-hand-side is known as the frame-potential and its known minimum is $N^2/d$ (when $N\geq d$) \cite{framepotential}. Thus we find that
		\begin{equation}\label{boundTq}
		T^Q\leq \sqrt{\frac{N^3\left(N-1\right)\left(d-1\right)}{4d}}+\binom{N}{2}.
		\end{equation}
		Note that this bound on $T^Q$ was first obtained in \cite{BNV13}. 
		
		Let us now focus on the case of interest, namely $N=d^2$. The bound \eqref{boundTq} is tight if and only if we can  ensure equality in our use of the concavity inequality  \eqref{conc} in Eq.~\eqref{upperbound}. Equality is achieved if and only if  $\forall x<x': |\braket{\psi_x}{\psi_{x'}}|^2=c$ for some constant $c$. Using Eq.~\eqref{Tq}, we immediately obtain that $c=1/(d+1)$. Hence, $T=T^Q$ implies that Alice prepares a SIC-ensemble.

		Next, we proceed to include the second sum in the witness \eqref{witness}. We denote the POVM-elements corresponding to the setting $\mathbf{povm}$ by $\{M_{\mathbf{povm}}^o\}_o$. Then, we have that 
		\begin{align}
		R^Q&\equiv \max_{\{\rho\},\{M_{\mathbf{povm}}\}}\sum_{x=1}^{N}P(o=x\lvert x,\mathbf{povm})= \max_{\{\rho_x\},\{E_x\}} \sum_{x=1}^{N} \Tr\left(\rho_x M_{\mathbf{povm}}^x\right)\\
		& \leq \max_{\{M_{\mathbf{povm}}^x\}} \sum_{x=1}^{N}  \lambda_{\text{max}}\left[M_{\mathbf{povm}}^x\right]\leq \max_{\{M_{\mathbf{povm}}^x\}} \Tr\left[\sum_{x=1}^N M_{\mathbf{povm}}^x\right]=d.
		\end{align}
		The first inequality is saturated if and only if $\rho_x$ is a pure state aligned with the eigenvector corresponding to the largest eigenvalue of $M_{\mathbf{povm}}^x$. The second inequality is saturated if and only if $\forall x$ $M_{\mathbf{povm}}^x$ is rank-one. The maximal quantum value of  $W_{d}$ is upper bounded by $T^Q+R^Q$. Since observing $T=T^Q$ implies that Alice's ensemble is SIC, it implies that in order to find $R=R^Q$ one requires $\{M_{\mathbf{povm}}^x\}$ to be rank-one and aligned with the ensemble $\{\ket{\psi_x}\}_{x=1}^N$. This identifies a SIC-POVM. Hence, finding $W_d=T^Q+R^Q$ uniquely implies that $\{M_{\mathbf{povm}}^x\}$ is a SIC-POVM. We conclude that the 
		\begin{equation}
		W^Q_{d}=\frac{1}{2}\sqrt{d^5(d-1)^2(d+1)}+\binom{d^2}{2}+d.
		\end{equation}
		self-tests that Alice prepares a SIC-ensemble and that Bob's setting $\mathbf{povm}$ corresponds to a SIC-POVM.

		\section{Symmetries for certifying non-projective measurements based on SIC-POVMs}
		In this section, we discuss in detail the symmetries of the witness $W_d$ introduced in the main text. The relations between Alice's input and Bob's inputs and outputs that constitute a successful contribution to the value of $W_d$ read
		\begin{align}\nonumber
		& o=x  \hspace{7mm} \text{when Bob has setting } \mathbf{povm}\\\nonumber
		& b=0  \hspace{7mm} \text{when Bob has setting } (y,y') \text{ and Alice has input } x=y\\\label{wincond}
		& b=1  \hspace{7mm} \text{when Bob has setting } (y,y') \text{ and Alice has input } x=y'.
		\end{align}
		
		We first identify the symmetries of the CCP, i.e. the transformations the preserve the winning conditions \eqref{wincond} under general quantum strategies, and then consider a restriction of the symmetries to quantum strategies with projective measurements. Let $S_N$  be the set of $N$-element permutations. We may permute Alice's input with $\omega\in S_N$, i.e. $x\rightarrow \omega(x)$. In order to preserve the winning condition $o=x$ for Bob's setting $\mathbf{povm}$, we therefore need to apply the same permutation to $o$, i.e. $o\rightarrow \omega(o)$. Similar re-labellings apply to  Bob's remaining settings $(y,y')$: the winning conditions $b=0$ when $x=y$, and $b=1$ when $x=y'$, are preserved by letting $(y,y')\rightarrow (\omega(y),\omega(y'))$. However, sometimes we will find that $\omega(y)>\omega(y')$ which does not constitue a proper measurement label. Therefore, whenever this is the case, we swap the labels, i.e. $(\omega(y),\omega(y'))\rightarrow (\omega(y'),\omega(y))$. The swap will preserve the summand of \eqref{witness} if we additionally also let $b\rightarrow b+1\mod{2}$. 
		
		Now, we consider the symmetries of $W_d$ under projective measurements only. Note first that due to their binary outcomes, the settings $(y,y')$ are always optimally implemented as projective measurements (these are extremal). Hence, we must only constrain the setting $\mathbf{povm}$ to be a  projective measurement. This means that at most $d$ of the POVM elements $\{M_\mathbf{povm}^x\}_{x=1}^{d^2}$ are non-zero, corresponding to  rank-one projectors. Without loss of generality we can choose these to correspond to the outcomes $o=1,\ldots,d$.  Evidently, although every $\omega \in S_N$ preserves the witness, not every $\omega$ preserves the projective constraint on the setting $\mathbf{povm}$. Therefore, we define $S_N^d$ as the the set of permutations of $N$ elements that do not affect either the first $d$ objects, or the last $N-d$ objects. This means that the rank-one and the zero projectors associated to $\mathbf{povm}$ will respectively be permuted amongst themselves. Consequently, every $\omega\in S_N^d$ preserves both the witness and the projective constraint. This fully characterises the set of symmetries used in the main text.

\end{document}